\newtheorem{thm}{Theorem}[section]
 \newtheorem{cor}[thm]{Corollary}
 \newtheorem{prop}[thm]{Proposition}
 \theoremstyle{definition}
 \newtheorem{defn}[thm]{Definition}
 \theoremstyle{remark}
 \newtheorem{rem}[thm]{Remark}
\journal{Journal of \LaTeX\ Templates}
\makeatletter \@addtoreset{equation}{section}
\renewcommand{\theequation}{\arabic{section}.\arabic{equation}}
\begin{document}

\begin{frontmatter}
\title{Inverse scattering transform and soliton solutions for the focusing Kundu-Eckhaus
equation with nonvanishing boundary conditions}
\tnotetext[mytitlenote]{Project supported by the Fundamental Research Fund for the Central Universities under the grant No. 2019ZDPY07.\\
\hspace*{3ex}$^{*}$Corresponding author.\\
\hspace*{3ex}\emph{E-mail addresses}: jinjieyang@cumt.edu.cn (J.J. Yang), sftian@cumt.edu.cn and
shoufu2006@126.com (S. F. Tian),
zqli@cumt.edu.cn (Z. Q. Li)}

\author{Jin-Jie Yang, Shou-Fu Tian$^{*}$ and Zhi-Qiang Li}
\address{
School of Mathematics and Institute of Mathematical Physics, China University of Mining and Technology,\\ Xuzhou 221116, People's Republic of China\\
}

\begin{abstract}

The focusing Kundu-Eckhaus (KE) equation with non-zero boundary conditions at infinity, under two cases: simple zeros and double zeros, is investigated systematically via Riemann-Hilbert (RH) problem. We derive some new results for the equation 
including the following seven parts.
 (\uppercase\expandafter{\romannumeral1})  The analyticities and symmetries of the Jost function and the scattering matrix are analyzed with the help of the normalized Lax pair. (\uppercase\expandafter{\romannumeral2})  Based on the resulting symmetries, the corresponding discrete spectrum set and residue conditions of scattering coefficients are further obtained, which is very important to construct the formulae of solution to the original equation. (\uppercase\expandafter{\romannumeral3})  A generalized RH problem is established  by combining the analytic properties of Jost functions and modified eigenfunctions. (\uppercase\expandafter{\romannumeral4})  The RH problem is solved by the corresponding asymptotic behavior combined with the Plemelj's formulae and Cauchy operator. The expression of the solution to the focusing KE equation is given under the condition of  non-reflection. (\uppercase\expandafter{\romannumeral5})  From the reflection coefficients and discrete spectrums, the trace formula and the corresponding theta condition are given to obtain the phase difference of the initial value at the boundary.
(\uppercase\expandafter{\romannumeral6}) 
For the double zeros, there is a similar framework from the set of discrete spectral points, but the operation process is much more complicated than that of simple zeros, and new results and phenomena appear.
(\uppercase\expandafter{\romannumeral7})  Some interesting phenomena are obtained that one of the solutions is gradually to rouge waves when the spectrum points tend to singular points   by choosing appropriate parameters.
\end{abstract}

\begin{keyword}
The focusing Kundu-Eckhaus
equation \sep  Nonzero boundary condition \sep Simple zeros and double zeros \sep Riemann-Hilbert problem \sep Soliton solutions.
\end{keyword}

\end{frontmatter}

\linenumbers

\section{Introduction}
Studing the gauge connection between different nonlinear systems including Landau-Lifshitz equations \cite{Orfanidis-1980} and derivative nonlinear Schr\"{o}dinger
(NLS) type equations \cite{Kaup-1978,Chen-1979},
Kundu proposed the famous equation \cite{Kundu-1984}
\begin{align}\label{M1}
i\hat{q}_{t}+\frac{1}{2}\hat{q}_{xx}+|\hat{q}|^{2}\hat{q}+2\beta^{2}
|\hat{q}|^{4}\hat{q}-2i\beta(|\hat{q}|^{2})_{x}\hat{q}=0,
\end{align}
with $\beta$ is a constant, which can be reduced to nonlinear Schr\"{o}dinger equation for $\beta=0$. The equation \eqref{M1} is a completely integrable system with many good properties, such as the Lax pair \cite{Geng-1999}-\cite{Yang-2014}, optical solitons \cite{Inc-2018}, soliton collisions \cite{Yan-2018}, rouge wave solutions \cite{Bayindir-2016,Wang-2014}, soliton solutions \cite{Deng-2018,Fan-2019} and the long-time asymptotic \cite{Wang-2018}.

In the past few decades, many effective methods have been put forward for solving nonlinear integrable systems, including bilinear method \cite{Hirota-1980}, Darboux transformation \cite{Ablowitz-1991}, inverse scattering (IS) method, etc., especially Zakharov et al. \cite{Zakharov-1984} have further developed the IS theory to form a modern version of the inverse scattering, i.e., RH problem. Using this method to solve nonlinear equations \cite{RHP-1}-\cite{RHP-10} have formed a boom. It is worth noting that the above literatures about RH problems are all discussed under the  zero boundary conditions (ZBC),  while some scholars have studied the condition of non-zero boundary conditions, including nonlocal nonlinear Schr\"{o}dinger equation \cite{Ablowitz-2018}, nonlocal Sine-Gordon/Sinh-Gordon Equations \cite{SG-2018}, focusing nonlinear Schr\"{o}dinger \cite{Biondini-2014}, focusing and defocusing Hirota equations \cite{Yan-2019}, TD equation \cite{zhu-2019,zhu-2018}, modified Landau-Lifshitz equation \cite{YT-2019},  etc.  For the NZBC,
Biondini and his team made a great contribution \cite{NZBC-1}-\cite{NZBC-9}. Compared with the ZBC, the difficulties in dealing with NZBC are as follows:
(\uppercase\expandafter{\romannumeral1}). One of the difficulties is to build a suitable RH problem, but there exit multi-value functions in the process of direct scattering. In order to avoid this situation, Riemann surface (please refer to \cite{Biondini-2014}) is introduced, which is a key point. (\uppercase\expandafter{\romannumeral2}). After the Riemann surface is introduced, the problem will be transformed from the original spectral $k$-plane  to a new variable $z$-plane. At the same time the difficulty is to establish the relationship between the two planes. (\uppercase\expandafter{\romannumeral3}). According to the analytical region, the analyticity of Jost function and scattering matrix can be further judged, and their respective symmetry is also a key step.  As we know, the  IS transform  and soliton solutions for the focusing KE equation \eqref{M1} under non-zero boundary conditions have not been reported, only the Fan and his team \cite{Fan-2019} has found the single soliton solutions under simple zeros, but for the double-soliton solutions, soliton solutions under the condition of double zeros, the corresponding trace formula and theta condition have not been given. Therefore, as for the problems mentioned above, we take the focusing KE equation as the model to a detailed study and some new and interesting phenomena are given.

Our calculations are based on the Lax pair of the focusing
KE equation \eqref{M1} as follows
\begin{align}\label{M2}
\begin{split}
\Psi_{x}+ik\sigma_{3}\Psi&=(\hat{Q}-i\beta \hat{Q}^{2}\sigma_{3})\Psi,\\
\Psi_{t}+ik^{2}\sigma_{3}\Psi&=\frac{1}{2}(V+4i\beta^{2}\hat{Q}^{4}\sigma_{3}-\beta
(\hat{Q}\hat{Q}_{x}-\hat{Q}_{x}\hat{Q}))\Psi,
\end{split}
\end{align}
with \begin{align*}
\hat{Q}=\left(\begin{array}{cc}
    0  & \hat{q} \\
    -\hat{q}^{*} & 0  \\
  \end{array}\right),\qquad
  \sigma_{3}=\left(\begin{array}{cc}
    1  & 0 \\
    0 & -1  \\
  \end{array}\right),
  \end{align*}\\
and $V=2kU-2\beta U^{3}-i(U^{2}+U_{x})\sigma_{3}$, here $k$ is the spectral parameter.

Take the transformation
\begin{align}\label{M3}
\phi(x,t;k)=e^{-i\beta\int|\hat{q}|^{2}dx} \sigma_{3}\Psi(x,t;k),
\end{align}
then the Lax pair \eqref{M2} can be converted to the standard focusing NLS equation
\begin{align}\label{M4}
\begin{split}
\phi_{x}+ik\sigma_{3}\phi&=Q\phi,\\
\phi_{t}+ik^{2}\sigma_{3}\phi&=\frac{1}{2}\hat{V}\phi,
\end{split}
\end{align}
with
\begin{align*}
Q=e^{-i\beta\int|\Phi|^{2}dx}\hat{Q}=\left(\begin{array}{cc}
    0  & q \\
    -q^{*} & 0  \\
  \end{array}\right),\\
  \hat{V}=e^{-i\beta\int|\Phi|^{2}dx}V=2kQ-i(Q^{2}+Q_{x})\sigma_{3},
  \end{align*}
here $q=\hat{q}e^{-i\beta\int|\hat{q}|^{2}dx}$, which means $\hat{q}=qe^{i\beta\int|q|^{2}dx}$.
\begin{rem}\label{R1}
Introducing the following transformation, and adding the additional conditions
\begin{align}\label{M5}
q=\nu e^{i\nu_{0}^{2}t},\quad \psi=e^{i\nu_{0}^{2}\sigma_{3} t/2}\phi,
\end{align}
and
\begin{align}\label{M6}
\nu\rightarrow \nu_{\pm}, \quad x\rightarrow\infty,\quad |\nu_{\pm}|=\nu_{0}>0.
\end{align}
\end{rem}

The outline of the work is arranged as:
In section 2, the asymptotic Lax pairs are obtained based on the boundary conditions. Furthermore, the analytic and symmetric properties of Jost function and scattering matrix are inferred using the obtained asymptotic Lax pairs. The residue conditions are given via the discrete spectrum, which is needed in  inverse transformation process. In section 3, the RH problem with simple zeros and reconstructing the potential are derived by  the analyticity of the modifying the eigenfunction and the scattering coefficient. In additional, trace formulae and theta conditions are given, also some new phenomenon.
In section 4, the focusing KE equation with double zeros is investigated similarity. But there are some difference including residue conditions, trace formulae etc.
Finally, some conclusions and discussions are presented in the last section.

\section{Direct scattering problem with NZBC}
In the direct scattering process, Jost function, scattering matrix and corresponding symmetry are given by using spectral analysis. In addition, the discrete spectrum, residue condition and asymptotic analysis are also presented in the section.
\subsection{The asymptotic Lax pair of focusing KE}
Based on Remark $1.1$, the equivalent Lax pairs can be expressed in the following form
\begin{align}\label{M7}
&\left\{ \begin{aligned}
&\psi_{x}=X\psi,\quad X=-ik\sigma_{3}+\tilde{Q},\\
&\psi_{t}=T\psi, \quad T=-ik^{2}\sigma_{3}+k\tilde{Q}
-\frac{i}{2}\sigma_{3}\tilde{Q}_{x}-i\frac{1}{2}\left(
\tilde{Q}^{2}+\nu_{0}^{2}\mathbb{I}\right)\sigma_{3},
     \end{aligned}  \right.
\end{align}
with
$
\tilde{Q}=\left(\begin{array}{cc}
    0  & \nu \\
    -\nu^{*} & 0  \\
  \end{array}\right).
  $ Then considering the asymptotic Lax pair  with the non-zero boundary conditions (NZBCs) as  $x\rightarrow\pm\infty$ and taking $\nu:=q$, we have
\begin{align}\label{M8}
&\left\{ \begin{aligned}
&\psi_{x}=X_{\pm}\psi,\quad X_{\pm}=\lim_{x\rightarrow\pm\infty}X=-ik\sigma_{3}+Q_{\pm},\\
&\psi_{t}=T_{\pm}\psi, \quad
T_{\pm}=\lim_{x\rightarrow\pm\infty}T=kX_{\pm},
     \end{aligned}  \right.
\end{align}
with
\begin{align*}
Q_{\pm}=\mathop{\lim}\limits_{x\rightarrow\pm\infty}\tilde{Q}=\left(\begin{array}{cc}
    0  & q_{\pm} \\
    -q^{*}_{\pm} & 0  \\
  \end{array}\right),\qquad
  \sigma_{3}=\left(\begin{array}{cc}
    1  & 0 \\
    0 & -1  \\
  \end{array}\right),
  \end{align*}
\subsection{Riemann surface and uniformization coordinate}
It is not difficult to verify that the eigenvalue of asymptotic matrix $X_{\pm}$ are multi-valued function $\pm i\sqrt{k^{2}+q_{0}^{2}}$, and in this case, unlike the zero boundary value, to deal with this situation we need to introduce a two-sheeted Riemann surface defined by
\begin{align}\label{M9}
\lambda^{2}=k^{2}+q_{0}^{2},
\end{align}
where the two-sheeted Riemann surface completed by gluing together two copies of extended complex $k$-plane $S_{1}$ and $S_{2}$ along the cut $iq_{0}[-1,1]$ between the branch points $k=\pm iq_{0}$ obtained by the value of $\sqrt{k^{2}+q_{0}^{2}}=0$.
Introducing the local polar coordinates
\begin{align}\label{M10}
k+iq_{0}=r_{1}e^{i\theta_{1}},\quad k-iq_{0}=r_{2}e^{i\theta_{2}},\quad -\frac{\pi}{2}<\theta_{1},\theta_{2}<\frac{3\pi}{2},
\end{align}
we get a single-valued analytical function on the Riemann surface
\begin{align}\label{M11}
\lambda(k)=&\left\{\begin{aligned}
&(r_{1}r_{2})^\frac{1}{2}e^\frac{{\theta_{1}+\theta_{2}}}{2}, \quad &on\quad S_{1},\\
-&(r_{1}r_{2})^\frac{1}{2}e^\frac{{\theta_{1}+\theta_{2}}}{2}, \quad &on\quad S_{2}.
\end{aligned} \right.
\end{align}
Define the uniformization variable $z$ by the conformal mapping \cite{Faddeev-1987}
\begin{align}\label{M12}
z=k+\lambda,
\end{align}
and form \eqref{M9}, one can get two single-value function
\begin{align}\label{M13}
k(z)=\frac{1}{2}\left(z-\frac{q_{0}^{2}}{z}\right),\quad \lambda(z)=\frac{1}{2}\left(z+\frac{q_{0}^{2}}{z}\right).
\end{align}

\begin{prop}
According to conformal mapping \eqref{M12}, some propositions can be observed as follows \\
 $\blacktriangleright$ $Im k>0$ of sheet $S_{1}$ and $Im k<0$ of sheet $S_{2}$ are mapped into $Im \lambda >0$;\\
 $\blacktriangleright$ $Im k<0$ of sheet $S_{1}$ and $Im k>0$ of sheet $S_{2}$ are mapped into $Im \lambda <0$;\\
 $\blacktriangleright$ The   branch  $[-iq_{0},iq_{0}]$ of $k$-plane is mapped into the branch  $[-q_{0},q_{0}]$ of $\lambda$-plane;\\
 $\blacktriangleright$ By the Joukowsky transformation map:\\
$\bullet$ $Im \lambda>0$ into domain
\begin{align*}
D^{+}=\left\{z\in \mathbb{C}:\left(|z|^{2}-q_{0}^{2}\right)Im z>0\right\},
\end{align*}
which means the upper half of the $\lambda$-plane maps to the upper outer half of the circle of radius $q_{0}$ and the inner half of the circle of the lower half of the $z$-plane.\\
$\bullet$ \emph{ $Im \lambda<0$ into domain}
\begin{align*}
D^{-}=\left\{z\in \mathbb{C}:\left(|z|^{2}-q_{0}^{2}\right)Im z<0\right\},
\end{align*}
which stands for the lower half of the $\lambda$-plane maps to the upper inner half of the circle of radius $q_{0}$ and the outer half of the circle of the lower half of the $z$-plane.\\
$\blacktriangleright$ On the sheet $S_{1}$, $z\rightarrow\infty$ as $k\rightarrow\infty$; on the sheet $S_{2}$,  $z\rightarrow 0$ as $k\rightarrow\infty$.
\end{prop}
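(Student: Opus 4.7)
The plan is to verify each bullet by combining the two explicit representations of $\lambda$: the polar form \eqref{M11} (useful for comparing sheets of the $k$-Riemann surface) and the rational form \eqref{M13} (useful for the Joukowsky analysis in the $z$-plane). First I would fix the half-plane statements by direct inspection of the angles $\theta_1,\theta_2$. On sheet $S_1$ one has $\lambda=(r_1r_2)^{1/2}e^{i(\theta_1+\theta_2)/2}$, so $\operatorname{sgn}(\operatorname{Im}\lambda)$ is controlled by whether $(\theta_1+\theta_2)/2$ lies in $(0,\pi)$ or in $(-\pi/2,0)\cup(\pi,3\pi/2)$. For $\operatorname{Im} k>0$ the chosen branch cut forces $\theta_1,\theta_2\in(0,\pi)$, giving $(\theta_1+\theta_2)/2\in(0,\pi)$ and hence $\operatorname{Im}\lambda>0$ on $S_1$; for $\operatorname{Im} k<0$ the same convention yields $\theta_1,\theta_2\in(-\pi/2,0)\cup(\pi,3\pi/2)$ with $\theta_1+\theta_2$ of a definite sign, giving $\operatorname{Im}\lambda<0$ on $S_1$. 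The opposite conclusions on $S_2$ then follow from the overall sign change in \eqref{M11}.

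Next, for the third bullet I would substitute $k=is$ with $s\in[-q_0,q_0]$ into \eqref{M9}, obtaining $\lambda^2=q_0^2-s^2\ge 0$; hence $\lambda\in\mathbb{R}$ and $|\lambda|\le q_0$, so the cut $iq_0[-1,1]$ is mapped onto $[-q_0,q_0]$. For the fourth bullet I would plug $z=\rho e^{i\varphi}$ into \eqref{M13} and compute
\begin{equation*}
\operatorname{Im}\lambda(z)=\tfrac{1}{2}\Bigl(\rho-\tfrac{q_0^2}{\rho}\Bigr)\sin\varphi=\frac{|z|^2-q_0^2}{2|z|^2}\,\operatorname{Im} z,
\end{equation*}
so $\operatorname{Im}\lambda>0$ iff $(|z|^2-q_0^2)\operatorname{Im} z>0$, which is exactly the defining condition of $D^+$; similarly $\operatorname{Im}\lambda<0$ corresponds to $D^-$. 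This simultaneously shows that the circle $|z|=q_0$ and the real axis $\operatorname{Im} z=0$ are precisely the images of the real $\lambda$-axis, and splits each of $\operatorname{Im}\lambda\gtrless 0$ into the claimed two pieces.

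For the last bullet I would use the large-$k$ expansion of $\lambda$ on each sheet: on $S_1$, $\lambda=+\sqrt{k^2+q_0^2}\sim k$ as $k\to\infty$, so $z=k+\lambda\sim 2k\to\infty$; on $S_2$, $\lambda=-\sqrt{k^2+q_0^2}\sim-k$, so
\begin{equation*}
z=k+\lambda=k-\sqrt{k^2+q_0^2}=-\frac{q_0^2}{k+\sqrt{k^2+q_0^2}}\longrightarrow 0.
\end{equation*}

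The step I expect to require the most care is the first one: the polar coordinates \eqref{M10} are not the standard principal-value ones, and one has to track what the cut $iq_0[-1,1]$ does to the allowed ranges of $\theta_1,\theta_2$ for $k$ in each open half-plane (including the real axis and points close to the cut from either side). Once that bookkeeping is done, the remaining bullets are short algebraic verifications using \eqref{M13}.
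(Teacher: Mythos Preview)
Your proposal is correct and in fact supplies considerably more than the paper does: in the paper this proposition is stated without proof and simply summarized by Figure~1, so there is no argument to compare against. Your verifications of the third, fourth, and fifth bullets via \eqref{M9} and \eqref{M13} are clean and exactly the right computations; in particular the identity $\operatorname{Im}\lambda(z)=\tfrac{|z|^{2}-q_{0}^{2}}{2|z|^{2}}\,\operatorname{Im} z$ is the efficient way to read off $D^{\pm}$.

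One small point on the first bullet: your parenthetical claim that ``for $\operatorname{Im} k>0$ the chosen branch cut forces $\theta_{1},\theta_{2}\in(0,\pi)$'' is not literally true when $0<\operatorname{Im} k<q_{0}$, since then $\operatorname{Im}(k-iq_{0})<0$ and $\theta_{2}$ falls outside $(0,\pi)$. The conclusion $\operatorname{Im}\lambda>0$ still holds, but one has to use that $\theta_{1}+\theta_{2}$ is an argument of $k^{2}+q_{0}^{2}$ together with the sign of $\operatorname{Im}(k^{2})=2\operatorname{Re} k\cdot\operatorname{Im} k$ and the actual ranges of $\theta_{1},\theta_{2}$ in each quadrant. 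You already flag this bookkeeping as the delicate step, so this is just a reminder that the case split is genuinely needed there.
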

The results above can be summarized as the following pictures

\centerline{\begin{tikzpicture}[scale=0.5]
\filldraw[gray, line width=0.5](1,0)--(3,0) arc (-180:0:2);
\path [fill=gray] (1,0) -- (9,0) to
(9,4) -- (1,4);
\path [fill=gray] (-8,0) -- (0,0) to
(0,4) -- (-8,4);
\filldraw[white, line width=0.5](3,0)--(7,0) arc (0:180:2);
\draw[fill] (5,0)node[below]{} circle [radius=0.035];
\draw[fill] (-4,-0)node[below]{} circle [radius=0.035];
\draw[-][thick](-8,0)--(-7,0);
\draw[-][thick](-7.0,0)--(-6.0,0);
\draw[-][thick](-6,0)--(-5,0);
\draw[-][thick](-5,0)--(-4,0)node[below right]{\footnotesize$0$};
\draw[-][thick](-4,0)--(-3,0);
\draw[-][thick](-3,0)--(-2,0);
\draw[-][thick](-2,0)--(-1,0);
\draw[->](-1,0)--(0,0)[thick]node[above]{$Rek$};
\draw[-][thick](-4,-4)--(-4,-3);
\draw[-][thick](-4,-3)--(-4,-2)node[below right]{\footnotesize$-iq_{0}$};
\draw[-][thick](-4,-2)--(-4,-1);
\draw[-][thick](-4,-1)--(-4,0);
\draw[-][thick](-4,0)--(-4,1);
\draw[-][thick](-4,1)--(-4,2)node[above right]{\footnotesize$iq_{0}$};
\draw[-][thick](-4,2)--(-4,3);
\draw[->][thick](-4,3)--(-4,4)[thick]node[right]{$Imk$};
\draw[->][thick](1,0)--(2,0);
\draw[-][thick](2,0)--(3,0);
\draw[-][thick](3,0)--(4,0);
\draw[<-][thick](4,0)--(5,0)node[below right]{\footnotesize$0$};
\draw[-][thick](5,0)--(6,0);
\draw[<-][thick](6,0)--(7,0);
\draw[->][thick](7,0)--(8,0);
\draw[-](8,0)--(9,0)[thick]node[above]{$Rez$};
\draw[-][thick](5,-4)--(5,-3);
\draw[-][thick](5,-3)--(5,-2)node[below right]{\footnotesize$-iq_{0}$};
\draw[-][thick](5,-2)--(5,-1);
\draw[-][thick](5,-1)--(5,0);
\draw[-][thick](5,0)--(5,1);
\draw[-][thick](5,1)--(5,2)node[above right]{\footnotesize$iq_{0}$};
\draw[-][thick](5,2)--(5,3);
\draw[-][thick](5,3)--(5,4)[thick]node[right]{$Imz$};
\draw[fill] (7,0) circle [radius=0.055][below right][thick]node{\footnotesize$0^{+}$};
\draw[fill] (3,0) circle [radius=0.055][below right][thick]node{\footnotesize$0^{-}$};
\draw[fill] (5,2) circle [radius=0.055];
\draw[fill] (5,-2) circle [radius=0.055];
\draw[fill] (-4,2) circle [radius=0.055];
\draw[fill] (-4,-2) circle [radius=0.055];
\draw[fill] (8,2.5) circle [radius=0.035][thick]node[right]{\footnotesize$z_{n}$};
\draw[fill][red] (8,-2.5) circle [radius=0.035][thick]node[right]{\footnotesize$z_{n}^{*}$};
\draw[fill][red] (3.7,1) circle [radius=0.035][thick]node[right]{\footnotesize$-\frac{q_{0}^{2}}{z_{n}}$};
\draw[fill] (3.7,-1) circle [radius=0.035][thick]node[right]{\footnotesize$-\frac{q_{0}^{2}}{z_{n}^{*}}$};
\draw[fill] (-2.5,1.5) circle [radius=0.035][thick]node[right]{\footnotesize$z_{n}$};
\draw[fill][red] (-2.5,-1.5) circle [radius=0.035][thick]node[right]{\footnotesize$z_{n}^{*}$};
\draw[-][line width=0.8] (7,0) arc(0:360:2);
\draw[->][line width=0.8] (7,0) arc(0:220:2);
\draw[->][line width=0.8] (7,0) arc(0:330:2);
\draw[->][line width=0.8] (7,0) arc(0:-330:2);
\draw[->][line width=0.8] (7,0) arc(0:-220:2);
\end{tikzpicture}}

\noindent {\small \textbf{Figure 1.} (Color online) Left Fig: the first sheet of the Riemann surface, presenting different discrete spectral points in the $k$-plane with $Imk>0$ (gray) and $Imk<0$ (white); Right Fig: shows the discrete spectral points on the $z$-plane after introducing the transformation with $Imz>0$ (gray) and $Imz<0$ (white), and gives  the orientation of the jump contours about the RH problem, where the black spectral points represent the zeros of $s_{11}(z)$ and the red spectral points represent the zeros of $s_{22}(z)$. }
\subsection{Jost function and its analyticity}
From the eigenvalue of the spectral problem $X_{\pm}$ and $T_{\pm}$, we can find an invertible matrix $Y_{\pm}(z)$ to diagonalize the two matrices. On this basis,  the so-called Jost function solution can be obtained, that is, the simultaneous solutions of lax pair \eqref{M8}:
\begin{align}\label{M14}
\psi_{\pm}(x,t;z)=Y_{\pm}(z)e^{-i\theta(x,t;z) \sigma_{3}},
\quad x\rightarrow \pm\infty,
\end{align}
with $\theta(x,t;z)=\lambda(z)[x+k(z)t]$ and
\begin{align}\label{M15}
Y_{\pm}(z)=\left(
  \begin{array}{cc}
     1 & -\frac{iq_{\pm}}{k+\lambda} \\
     -\frac{iq_{\pm}^{*}}{k+\lambda} & 1 \\
  \end{array}
\right)=\mathbb{I}-(i/z)\sigma_{3}Q_{\pm}.
\end{align}
Let
\begin{align}\label{M16}
u_{\pm}(x,t;z)=\psi_{\pm}(x,t;z)e^{i\theta(x,t;z)\sigma_{3}}
\rightarrow Y_{\pm}(z)\quad x\rightarrow \pm\infty,
\end{align}
such that
\begin{align}\label{M17}
\begin{matrix}
u_{-}(x,t;z)=Y_{-}+\int_{-\infty}^{x}Y_{-}e^{-i\lambda(x-y)\sigma_{3}}Y_{-}^{-1}\Delta Q_{-}(y,t)u_{-}(y,t;z)e^{i\lambda(x-y)\sigma_{3}}\, dy,\\
u_{+}(x,t;z)=Y_{+}-\int_{x}^{\infty}Y_{+}e^{-i\lambda(x-y)\sigma_{3}}Y_{+}^{-1}\Delta Q_{+}(y,t)u_{+}(y,t;z)e^{i\lambda(x-y)\sigma_{3}}\, dy.
\end{matrix}
\end{align}
One can obtain the following Proposition about the analyticity of the function $u_{\pm}(z)$, as $\psi_{\pm}(z)$ from the expression \eqref{M16}.
\begin{prop}
The columns $u_{+,1}(x,t;z)$ and $u_{-,2}(x,t;z)$ are analytic in $D^{-}$ of $z$-plane, and the columns $u_{-,1}(x,t;z)$ and $u_{+,2}(x,t;z)$ are analytic in $D^{+}$ of $z$-plane, here $u_{\pm,i}(x,t;z)$ ($i=1, 2$) denote the $i$-th column of $u_{\pm}$.
\end{prop}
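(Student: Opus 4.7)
The plan is to read off each column of the Volterra integral equations \eqref{M17} separately and to establish, column by column, a convergent Neumann expansion in the corresponding open subdomain of the uniformization variable $z$.

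First I would note that for any $2\times 2$ matrix $M$, the conjugation $e^{-i\lambda(x-y)\sigma_{3}}Me^{i\lambda(x-y)\sigma_{3}}$ leaves the diagonal of $M$ unchanged while multiplying $M_{12}$ by $e^{-2i\lambda(x-y)}$ and $M_{21}$ by $e^{2i\lambda(x-y)}$. Taking the $j$-th column of \eqref{M17} with $M=Y_{\pm}^{-1}\Delta Q_{\pm}u_{\pm}$ therefore inserts exactly one off-diagonal exponential into the integrand: for $u_{-,1}$ the factor is $e^{2i\lambda(x-y)}$ with $x-y\ge 0$; for $u_{-,2}$ it is $e^{-2i\lambda(x-y)}$ with $x-y\ge 0$; for $u_{+,1}$ it is $e^{2i\lambda(x-y)}$ with $x-y\le 0$; and for $u_{+,2}$ it is $e^{-2i\lambda(x-y)}$ with $x-y\le 0$. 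In each case, requiring the relevant exponential to be bounded on the domain of integration is equivalent either to $\operatorname{Im}\lambda(z)\ge 0$ or to $\operatorname{Im}\lambda(z)\le 0$, which by Proposition~2.1 translates to $z\in\overline{D^{+}}$ or $z\in\overline{D^{-}}$ respectively. Assembling the four cases yields precisely the regions stated in the proposition.

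Second, I would iterate each column equation to produce a Neumann series $u_{\pm,j}(x,t;z)=\sum_{n\ge 0}u_{\pm,j}^{(n)}(x,t;z)$, with $u_{\pm,j}^{(0)}$ equal to the $j$-th column of $Y_{\pm}(z)$. Combining the boundedness of $Y_{\pm}(z)$ and $Y_{\pm}^{-1}(z)$ on any compact subset of the relevant open domain (kept away from $z=\pm iq_{0}$ and from $z=0$), the boundedness of the exponential factor established in the previous paragraph, and the integrability $\Delta Q_{\pm}(\cdot,t)\in L^{1}(\mathbb{R}_{\pm})$ inherited from the NZBC, the standard Volterra/telescoping estimate yields
\[
\bigl|u_{\pm,j}^{(n)}(x,t;z)\bigr|\le\frac{C(z)^{n}}{n!}\left(\int_{\mathbb{R}_{\pm}}|\Delta Q_{\pm}(y,t)|\,dy\right)^{n},
\]
with $C(z)$ locally bounded on compacta. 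This gives absolute and uniform convergence of the series on compact subsets of the appropriate open domain, hence existence and continuity of $u_{\pm,j}$ there. Each iterate $u_{\pm,j}^{(n)}$ is in addition analytic in $z$ on that domain, since $\lambda(z)=\tfrac{1}{2}(z+q_{0}^{2}/z)$ is analytic on $\mathbb{C}\setminus\{0\}$, the matrices $Y_{\pm}(z)^{\pm 1}$ are analytic away from $z=\pm iq_{0}$, and the integrand is a finite product of these analytic ingredients with a bounded exponential. Weierstrass's theorem on local uniform limits of analytic functions (equivalently, Morera combined with Fubini on small rectangles) then transfers analyticity from the iterates to the Neumann sum $u_{\pm,j}$ itself, which proves the proposition.

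The main technical obstacle is uniformity of the constant $C(z)$ as $z$ approaches the boundary $\partial D^{\pm}$, namely the real axis where $\operatorname{Im}\lambda=0$, the circle $|z|=q_{0}$ where $\operatorname{Im}\lambda$ changes sign, and the degeneracy points $z=\pm iq_{0}$ and $z=0$ of $Y_{\pm}(z)$. For the open-set analyticity asserted in Proposition~2.2 these loci can simply be excluded by restricting to interior compacta; the more delicate boundary behaviour needed to set up the jump across the contour $\Sigma=\mathbb{R}\cup\{|z|=q_{0}\}$ and to analyse the branch-point behaviour is a separate matter to be treated when formulating the Riemann-Hilbert problem in Section 3.
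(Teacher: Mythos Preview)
Your argument is correct and is precisely the standard Neumann-series analysis of the Volterra equations \eqref{M17} that the paper invokes but does not write out; the paper simply asserts the proposition as a consequence of \eqref{M17} without supplying the estimates, so your proposal fills in exactly the intended details. The identification of the controlling exponentials $e^{\pm 2i\lambda(x-y)}$ with the sign of $x-y$ and hence with $\operatorname{Im}\lambda\gtrless 0$, i.e.\ $z\in D^{\pm}$, matches the paper's regions, and your handling of the boundary/branch points as a separate issue is consistent with how the paper postpones that discussion to the RH formulation.
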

\begin{cor}\label{c1}
From the expression \eqref{M16}, the analyticity of $u_{\pm,i}(x,t;z)$ is the same as that of $\psi_{\pm,i}(x,t;z)$ ($i=1, 2$), that is, $\psi_{+,1}(x,t;z)$ and $\psi_{-,2}(x,t;z)$ are analytic in $D^{-}$ of $z$-plane, and the rest in the upper half plane.
\end{cor}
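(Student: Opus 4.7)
The plan is to deduce the analyticity of each Jost column $\psi_{\pm,i}(x,t;z)$ directly from Proposition~2.2 together with the defining relation \eqref{M16}, which writes $\psi_{\pm}(x,t;z)=u_{\pm}(x,t;z)e^{-i\theta(x,t;z)\sigma_{3}}$. The first step is simply to read off the column version of this identity: $\psi_{\pm,1}=u_{\pm,1}\,e^{-i\theta}$ and $\psi_{\pm,2}=u_{\pm,2}\,e^{+i\theta}$, so each Jost column differs from the corresponding modified eigenfunction column only by a scalar exponential factor.

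The second step is to check that the factor $e^{\mp i\theta(x,t;z)}$ is holomorphic on each of the half-regions $D^{+}$ and $D^{-}$. Since $\theta(x,t;z)=\lambda(z)\bigl[x+k(z)t\bigr]$ and the uniformization formulas \eqref{M13} express $\lambda(z)$ and $k(z)$ as rational functions of $z$ with the sole singularity at $z=0$, the phase $\theta(\cdot\,;z)$ is holomorphic on $\mathbb{C}\setminus\{0\}$, and in particular on the open domains $D^{\pm}$ (neither of which contains the origin). The exponential of a holomorphic function is holomorphic, so $e^{\pm i\theta(x,t;z)}$ is analytic on $D^{\pm}$.

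The third step is to combine the two observations: multiplying a column that is analytic on some open region by a scalar function analytic on the same region yields an analytic column on that region. Applying this with the domains supplied by Proposition~2.2 — namely $u_{+,1},u_{-,2}$ analytic on $D^{-}$ and $u_{-,1},u_{+,2}$ analytic on $D^{+}$ — transfers the statement verbatim to $\psi_{+,1},\psi_{-,2}$ and $\psi_{-,1},\psi_{+,2}$ respectively, which is exactly the content of the corollary.

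There is no real obstacle here; the only point that deserves a line of care is the behaviour at $z=0$, since $\theta$ has an essential singularity there through the $q_{0}^{2}/z$ term in $\lambda$ and $k$. Because $D^{\pm}$ are open sets that do not include the origin, this singularity does not interfere with analyticity on $D^{\pm}$, and the corollary follows at once. (For a later RH analysis one will of course have to revisit $z=0$ when discussing asymptotics, but that is outside the scope of this corollary.)
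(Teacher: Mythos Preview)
Your argument is correct and is exactly the reasoning the paper has in mind: the corollary is stated there without proof, as an immediate consequence of \eqref{M16} and Proposition~2.2, and you have simply spelled out the obvious step that each column of $\psi_{\pm}$ differs from the corresponding column of $u_{\pm}$ by the entire scalar factor $e^{\mp i\theta(x,t;z)}$, which is holomorphic on $D^{\pm}$ because $\lambda(z),k(z)$ are rational with their only pole at $z=0\notin D^{\pm}$. Nothing further is needed.
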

\begin{thm} (\textbf{Liouville's formula})
Assume that $M$ is a $n$-th order matrix and satisfies a homogeneous linear differential equation $Y'=M(x)Y$, here $Y$ is an $n$ vector. If a matrix $M$ is a solution of the differential equation,  one has $(\det Y)_{x}=trM\det Y$, furthermore $\det Y(x)=\det Y(x_{0})e^{\int_{x_{0}}^{x}trM(\zeta)d\zeta}$.

Due to $trX(x,t;z)=trT(x,t;z)=0$ in the Lax pair \eqref{M7}, from \textbf{Liouville's formula} one can infer that
\begin{align}\label{M18}
\det\psi_{\pm}(x,t;z)=\det Y_{\pm}(z)=\gamma(z),\quad z\in\Sigma.
\end{align}
with $\gamma(z)=1+q_{0}^{2}/z^{2}\neq 0$ as $z\neq\pm iq_{0}$.
For $\Sigma_{0}=\Sigma/\{\pm iq_{0}\}$, because of $\psi_{\pm}(x,t;z)$ are fundamental solutions of Lax pair \eqref{M8}, thus there exists a constant matrix $S(z)$ (it's independent of the variable $x$ and $t$) satisfies
\begin{align}\label{M19}
 \psi_{+}(x,t;z)=\psi_{-}(x,t;z)S(z),\quad z\in\Sigma_{0},
\end{align}
which implies
\begin{align}\label{M20}
\psi_{+,1}=s_{11}\psi_{-,1}+s_{21}\psi_{-,2},\quad
\psi_{+,2}=s_{12}\psi_{-,1}+s_{22}\psi_{-,2},
\end{align}
here $s_{ij}$ $(i, j=1, 2)$ are element of the matrix $S(z)$. From \eqref{M14}, one has $\det S(z)=1$. Further the analyticity of the matrix $S(z)$ is discussed.

\begin{prop}
Assume $q-q_{\pm}\in L^{1}(\mathbb{R^{\pm}})$,  the elements of $s_{11}$ and $s_{22}$ are analytic in the regin $D^{-}$ and $D^{+}$, as well as continuously to $D^{+}\cup\Sigma_{0}$ and $D^{-}\cup\Sigma_{0}$, respectively. The off-diagonal elements of $S(z)$, although not analytical, continue to $\Sigma_{0}$, for convenience, we will omit the subscript next.
\end{prop}
\begin{proof}
Resorting to \eqref{M19}, one has
\begin{align}
&s_{11}(z)=\frac{Wr\left(\psi_{+,1},\psi_{-,2}\right)}{\gamma},\label{Wr-1}\quad
s_{22}(z)=\frac{Wr\left(\psi_{-,1},\psi_{+,2}\right)}{\gamma},\\
&s_{12}(z)=\frac{Wr\left(\psi_{+,2},\psi_{-,2}\right)}{\gamma},\quad
s_{21}(z)=\frac{Wr\left(\psi_{-,1},\psi_{+,1}\right)}{\gamma},
\end{align}
with $\gamma(z)=\det E_{\pm}(z)=1+q_{0}^{2}/z^{2}$. Based on the Corollary 2.3, the Proposition can be derived.
\end{proof}
Finally,  the reflection coefficients playing an important role in the inverse problem are introduced by the following equation
\begin{align}\label{Q21}
\rho(z)=s_{21}/s_{11},\quad \tilde{\rho}(z)=s_{12}/s_{22},\quad \forall z\in\Sigma.
\end{align}
\end{thm}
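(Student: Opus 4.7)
The plan is to invert the scattering relation $\psi_{+}(x,t;z)=\psi_{-}(x,t;z)S(z)$ established in \eqref{M19} and read off each entry $s_{ij}$ as a $2\times 2$ determinant (Wronskian) of columns of the Jost matrices, then transfer the analyticity information from Corollary \ref{c1} to $S(z)$. This reduces the problem to a purely algebraic manipulation together with one normalization identity, which is exactly what Theorem 2.4 provides.

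Concretely, I would first write the relation \eqref{M19} column by column, obtaining \eqref{M20}. Taking the $2\times 2$ determinant of $(\psi_{+,1},\psi_{-,2})$ and exploiting bilinearity and alternation yields
\begin{equation*}
\det(\psi_{+,1},\psi_{-,2}) \;=\; s_{11}\det(\psi_{-,1},\psi_{-,2}) \;=\; s_{11}\,\gamma(z),
\end{equation*}
since $s_{21}\det(\psi_{-,2},\psi_{-,2})=0$ and $\det\psi_{-}=\gamma(z)$ by Liouville's formula \eqref{M18}. An identical argument produces analogous expressions for $s_{22}$, $s_{12}$, and $s_{21}$. Because $\gamma(z)=1+q_{0}^{2}/z^{2}$ is a nonvanishing analytic function on $\Sigma_{0}=\Sigma\setminus\{\pm iq_{0}\}$ (in fact nonvanishing away from $z=\pm iq_{0}$), dividing is harmless and the analyticity of each $s_{ij}$ is inherited directly from the analyticity of the two column vectors entering its Wronskian.

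To finish, I would invoke Corollary \ref{c1}: $\psi_{+,1}$ and $\psi_{-,2}$ are both analytic in $D^{-}$, so their Wronskian is analytic in $D^{-}$, and hence so is $s_{11}$; symmetrically $s_{22}$ is analytic in $D^{+}$. For the off-diagonal entries, the two columns involved are analytic on opposite sides of the contour, so no common analyticity region exists; however, under the hypothesis $q-q_{\pm}\in L^{1}(\mathbb{R}^{\pm})$, the Neumann-series representations in \eqref{M17} converge uniformly for $z\in\Sigma_{0}$, so each $\psi_{\pm,i}$ extends continuously up to $\Sigma_{0}$, and therefore all four $s_{ij}$ extend continuously to $\Sigma_{0}$ as well.

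The main obstacle I anticipate is not the algebra, which is immediate, but the continuous-extension statement at the spectral contour: one has to check that the integral equations \eqref{M17} still produce well-defined bounded Jost columns as $z$ approaches $\Sigma_{0}$, which requires controlling the exponentials $e^{\pm i\lambda(x-y)\sigma_{3}}$ on the boundary (they are oscillatory but bounded there) and using $q-q_{\pm}\in L^{1}$ to make the Volterra iteration converge; the singularities at the branch points $z=\pm iq_{0}$, where $\gamma(z)$ vanishes and $Y_{\pm}(z)$ becomes degenerate, are precisely the points excluded from $\Sigma_{0}$, which is why the conclusion is stated on $\Sigma_{0}$ rather than all of $\Sigma$.
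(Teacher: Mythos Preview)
Your proposal is correct and follows essentially the same approach as the paper: derive the Wronskian representations of the $s_{ij}$ from the scattering relation \eqref{M19} together with $\det\psi_{-}=\gamma(z)$, and then read off the analyticity regions directly from Corollary~\ref{c1}. The paper's proof is more terse (it simply states the Wronskian formulas and cites Corollary~2.3), whereas you supply the bilinearity computation explicitly and spell out the continuous-extension argument via the Volterra iteration of \eqref{M17}; but the logical skeleton is identical.
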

\subsection{Scattering matrix and its analyticity}
To get the discrete spectrum and residue condition needed in the inverse scattering process, we need to study the symmetry of the scattering problem, which is different from the zero boundary condition. There is only one symmetry, that is, mapping $k\rightarrow k^{*}$, and the existence of Riemann surface makes the symmetry more complex under the condition of non zero boundary value. Correspondingly, there are two kinds of symmetry, which read
\begin{prop}
Two different kinds of transformations:\\
$\blacktriangleright$ Make the transformation $z\mapsto z^{*}$, which means $(k,\lambda)\mapsto(k^{*},\lambda^{*})$;\\
$\blacktriangleright$ Make the transformation $z\mapsto -q_{0}^{2}/z$, which leads
$(k,\lambda)\mapsto(k,-\lambda)$;
\end{prop}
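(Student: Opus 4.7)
The proposition asserts two involutions on the uniformization variable $z$ together with their induced action on the pair $(k,\lambda)$. The plan is to verify both statements by direct substitution into the explicit formulas \eqref{M13}, with a short bookkeeping check against the branch structure introduced in \eqref{M10}--\eqref{M11} and Proposition~2.1. No auxiliary machinery is needed beyond the definition $\lambda^2=k^2+q_0^2$ and the fact that $q_0>0$ is real.

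For the first symmetry $z\mapsto z^{*}$, I would substitute into $k(z)=\tfrac12(z-q_0^2/z)$ and $\lambda(z)=\tfrac12(z+q_0^2/z)$. Since $q_0$ is real, complex conjugation commutes with the rational operations involved, so $k(z^{*})=k(z)^{*}$ and $\lambda(z^{*})=\lambda(z)^{*}$ fall out immediately. I would then check on the polar form $\lambda(k)=\pm(r_1r_2)^{1/2}e^{i(\theta_1+\theta_2)/2}$ that conjugation respects the sheet labels $S_1,S_2$, so that the identity holds as an equality of single-valued functions on the uniformized plane rather than only up to a sign in $\lambda$.

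For the second symmetry $z\mapsto -q_0^2/z$, the substitution gives
\begin{align*}
k(-q_0^2/z)=\tfrac12(-q_0^2/z+z)=k(z),\qquad
\lambda(-q_0^2/z)=\tfrac12(-q_0^2/z-z)=-\lambda(z),
\end{align*}
which is exactly the claimed action. I would further observe, using the Joukowsky-map description in Proposition~2.1, that this involution exchanges the interior and exterior of the circle $|z|=q_0$, which geometrically corresponds to moving between the two sheets of the Riemann surface and therefore to flipping the sign of $\lambda$ while preserving $k$.

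The only real obstacle is the branch-cut bookkeeping: one must confirm that each involution is compatible with the polar ranges fixed in \eqref{M10} and with the sheet convention \eqref{M11}, so that the stated equalities $(k,\lambda)\mapsto(k^{*},\lambda^{*})$ and $(k,\lambda)\mapsto(k,-\lambda)$ hold globally on the uniformized $z$-plane and not merely on a single sheet. Once this compatibility is recorded, the remainder of the proof reduces to the one-line algebraic identities above.
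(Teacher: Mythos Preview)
Your proposal is correct; the paper itself offers no proof of this proposition, treating it as an immediate observation from the definitions \eqref{M12}--\eqref{M13}. Your direct substitution into $k(z)=\tfrac12(z-q_0^2/z)$ and $\lambda(z)=\tfrac12(z+q_0^2/z)$ is exactly the verification one would supply, and your additional remarks on branch-cut compatibility and the sheet-exchanging geometry of the Joukowsky map are a useful elaboration beyond what the paper records.
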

\begin{cor}
The symmetries for the Jost function $\psi_{\pm}\in\Sigma$ are presented for $z\in\Sigma$ as follows:
\begin{subequations}
\begin{align}
\psi_{\pm}(z)&=-\sigma_{0}\psi_{\pm}^{*}(z^{*})\sigma_{0}, \label{Q22}\\
\psi_{\pm}(z)&=-\frac{i}{z}\psi_{\pm}(-\frac{q_{0}^{2}}{z})
\sigma_{3}Q_{\pm},\label{Q23}
\end{align}
\end{subequations}
satisfying
\begin{align}
\psi_{\pm,1}(z)&=\sigma_{0}\psi_{\pm,2}^{*}(z^{*}),\quad \qquad
\psi_{\pm,2}(z)=-\sigma_{0}\psi_{\pm,1}^{*}(z^{*}),\label{Q24}\\
\psi_{\pm,1}(z)&=-(\frac{iq_{\pm}^{*}}{z})\psi_{\pm,2}(-\frac{iq_{0}^{2}}{z}),\quad
\psi_{\pm,2}(z)=-(\frac{iq_{\pm}}{z})\psi_{\pm,1}(-\frac{iq_{0}^{2}}{z}).\label{Q25}
\end{align}
\end{cor}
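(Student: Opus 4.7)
The proof of Corollary 2.6 proceeds by exploiting the two invariances in Proposition 2.5 at the level of the Lax pair \eqref{M8}, and then matching the asymptotic normalizations of the Jost solutions as $x\to\pm\infty$.

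For the first symmetry \eqref{Q22}, I would substitute $z\mapsto z^{*}$ in \eqref{M8} and take the complex conjugate of both equations. The relations $k(z^{*})=\overline{k(z)}$ and $\lambda(z^{*})=\overline{\lambda(z)}$ (immediate from \eqref{M13}) turn the $x$-equation into $[\psi_{\pm}^{*}(z^{*})]_{x}=(ik\sigma_{3}+\tilde Q^{*})\psi_{\pm}^{*}(z^{*})$, and similarly for the $t$-equation. Using the defining algebraic properties of the involution matrix, namely $\sigma_{0}\sigma_{3}\sigma_{0}^{-1}=-\sigma_{3}$ and $\sigma_{0}\tilde Q^{*}\sigma_{0}^{-1}=\tilde Q$, one verifies that $-\sigma_{0}\psi_{\pm}^{*}(x,t;z^{*})\sigma_{0}$ satisfies the original Lax pair \eqref{M8}. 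To identify it with $\psi_{\pm}(x,t;z)$, I would evaluate the asymptotic limit: the identity $\sigma_{0}e^{i\theta\sigma_{3}}\sigma_{0}^{-1}=e^{-i\theta\sigma_{3}}$ together with the direct check $\sigma_{0}Y_{\pm}^{*}(z^{*})\sigma_{0}^{-1}=Y_{\pm}(z)$ on the explicit form \eqref{M15} forces the asymptotic to coincide with $Y_{\pm}(z)e^{-i\theta(x,t;z)\sigma_{3}}$. Uniqueness of the Volterra solutions \eqref{M17} then yields \eqref{Q22}.

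For the second symmetry \eqref{Q23}, the key observation is that $k(-q_{0}^{2}/z)=k(z)$ while $\lambda(-q_{0}^{2}/z)=-\lambda(z)$, so both $\psi_{\pm}(x,t;z)$ and $\psi_{\pm}(x,t;-q_{0}^{2}/z)$ solve the same ODE system, differing only in the sign of the exponent in the asymptotic factor. To correct the sign I would multiply on the right by a matrix anticommuting with $\sigma_{3}$; since $\{\sigma_{3},Q_{\pm}\}=0$, the natural choice is $\sigma_{3}Q_{\pm}$. The identities $k+\lambda=z$ from \eqref{M12} and $k-\lambda=-q_{0}^{2}/z$ (from $k^{2}-\lambda^{2}=-q_{0}^{2}$), together with $|q_{\pm}|^{2}=q_{0}^{2}$, give after a short computation on \eqref{M15} the pivotal algebraic relation $-(i/z)\,Y_{\pm}(-q_{0}^{2}/z)\,\sigma_{3}Q_{\pm}=Y_{\pm}(z)$. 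Pushing the exponential through via $e^{i\theta\sigma_{3}}\sigma_{3}Q_{\pm}=\sigma_{3}Q_{\pm}\,e^{-i\theta\sigma_{3}}$, the asymptotic of $-(i/z)\,\psi_{\pm}(x,t;-q_{0}^{2}/z)\,\sigma_{3}Q_{\pm}$ matches that of $\psi_{\pm}(x,t;z)$, and uniqueness again closes the argument.

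The column relations \eqref{Q24} and \eqref{Q25} follow at once by extracting columns: right-multiplication by $\sigma_{0}$ swaps the columns with a sign change, and right-multiplication by $\sigma_{3}Q_{\pm}$ shuffles the columns with prefactors $q_{\pm}^{*}$ and $q_{\pm}$, producing the coefficients $iq_{\pm}^{*}/z$ and $iq_{\pm}/z$. The main technical obstacle is the bookkeeping behind the two matrix identities $\sigma_{0}Y_{\pm}^{*}(z^{*})\sigma_{0}^{-1}=Y_{\pm}(z)$ and $-(i/z)Y_{\pm}(-q_{0}^{2}/z)\sigma_{3}Q_{\pm}=Y_{\pm}(z)$; once these are in hand, the covariance of $\tilde Q$ and $\sigma_{3}$ under the two involutions ensures that the Volterra integral representations \eqref{M17} are preserved, and the symmetries transfer automatically from the asymptotic states to the full Jost solutions.
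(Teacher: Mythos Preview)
Your argument is correct and is precisely the standard route: verify that each of the two involutions in Proposition~2.5 maps solutions of \eqref{M8} to solutions, then pin down the constant right-factor by matching the explicit asymptotic eigenvector matrices \eqref{M15}. The two algebraic identities you isolate, $\sigma_{0}Y_{\pm}^{*}(z^{*})\sigma_{0}^{-1}=Y_{\pm}(z)$ and $-(i/z)\,Y_{\pm}(-q_{0}^{2}/z)\,\sigma_{3}Q_{\pm}=Y_{\pm}(z)$ (the latter following from $(\sigma_{3}Q_{\pm})^{2}=q_{0}^{2}\mathbb{I}$), are exactly what is needed, and the column relations \eqref{Q24}--\eqref{Q25} drop out by reading off columns.

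The paper itself gives no proof of this corollary; it is simply asserted as a consequence of the two transformations listed in the preceding proposition. Your write-up therefore supplies the details the paper omits, and does so along the line the authors evidently had in mind (covariance of the Lax pair plus uniqueness of the Jost normalization via \eqref{M17}). There is nothing to correct.
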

\begin{prop}
The symmetries for the scattering matrix $S(z)$ are exhibited for $z\in\Sigma$ as follows:
\begin{align}\label{Q26}
&S^{*}(z^{*})=-\sigma_{0}S(z)\sigma_{0},\\
&S(z)=(\sigma_{3}Q_{-})^{-1}S(-q_{0}^{2}/z)\sigma_{3}Q_{+}.
\end{align}
\end{prop}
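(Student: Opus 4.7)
The plan is to derive both symmetries directly from the scattering relation $\psi_{+}(x,t;z)=\psi_{-}(x,t;z)S(z)$ established in \eqref{M19}, combined with the two Jost-function symmetries \eqref{Q22} and \eqref{Q23} in Corollary (the preceding result). The point is that $S(z)$ is uniquely determined as the transition matrix between two fundamental solutions, so any symmetry of the Jost functions transfers to $S(z)$ after a small amount of matrix rearrangement.

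For the first identity, I would start from \eqref{M19} evaluated at $z^{*}$, namely $\psi_{+}(z^{*})=\psi_{-}(z^{*})S(z^{*})$, and take the complex conjugate of both sides to obtain $\psi_{+}^{*}(z^{*})=\psi_{-}^{*}(z^{*})S^{*}(z^{*})$. Then I would substitute \eqref{Q22} in the equivalent form $\psi_{\pm}^{*}(z^{*})=-\sigma_{0}\psi_{\pm}(z)\sigma_{0}$ on both sides, which produces
\begin{align*}
-\sigma_{0}\psi_{+}(z)\sigma_{0}=-\sigma_{0}\psi_{-}(z)\sigma_{0}S^{*}(z^{*}).
\end{align*}
Left-multiplying by $-\sigma_{0}^{-1}$ and comparing with \eqref{M19} gives $\sigma_{0}S^{*}(z^{*})\sigma_{0}^{-1}=S(z)$; then using the fact that $\sigma_{0}$ squares to $\pm\mathbb{I}$ (as dictated by the column relations \eqref{Q24}) one rewrites this as $S^{*}(z^{*})=-\sigma_{0}S(z)\sigma_{0}$, which is the claim.

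For the second identity, I would apply \eqref{Q23} at $z$ to both sides of \eqref{M19}: substituting $\psi_{\pm}(z)=-\tfrac{i}{z}\psi_{\pm}(-q_{0}^{2}/z)\sigma_{3}Q_{\pm}$ cancels the scalar factor $-i/z$ and yields
\begin{align*}
\psi_{+}(-q_{0}^{2}/z)\sigma_{3}Q_{+}=\psi_{-}(-q_{0}^{2}/z)\sigma_{3}Q_{-}S(z).
\end{align*}
Invoking \eqref{M19} once more at the argument $-q_{0}^{2}/z$ to replace the left-hand side with $\psi_{-}(-q_{0}^{2}/z)S(-q_{0}^{2}/z)\sigma_{3}Q_{+}$ and cancelling the (invertible) factor $\psi_{-}(-q_{0}^{2}/z)$ leaves $S(-q_{0}^{2}/z)\sigma_{3}Q_{+}=\sigma_{3}Q_{-}S(z)$, which on solving for $S(z)$ is exactly $S(z)=(\sigma_{3}Q_{-})^{-1}S(-q_{0}^{2}/z)\sigma_{3}Q_{+}$.

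The only real obstacle is algebraic bookkeeping: one must be careful that $Q_{\pm}$ is invertible (which follows from $\det Q_{\pm}=|q_{\pm}|^{2}=q_{0}^{2}\neq 0$ under the NZBC \eqref{M6}) and that $\sigma_{0}^{2}$ has the correct sign so that the two forms of the first symmetry match. Everything else is a direct transfer of known Jost symmetries through the linear relation $\psi_{+}=\psi_{-}S$.
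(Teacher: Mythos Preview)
Your proposal is correct and follows exactly the route the paper implicitly relies on: the proposition is stated without a separate proof because it is meant to be read as an immediate consequence of the Jost symmetries \eqref{Q22}--\eqref{Q23} transferred through the scattering relation \eqref{M19}, which is precisely what you do. The only cosmetic point is that you can state outright $\sigma_{0}^{2}=-\mathbb{I}$ (as dictated by the column relations \eqref{Q24}), which removes the ambiguity in your last step for the first identity.
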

\begin{cor}
The relationship between the scattering coefficients and reflection coefficients can be derived via the above symmetries for $z\in\Sigma$
\begin{align}
s_{22}(z)=s^{*}_{11}(z^{*}),  s_{12}(z)=-s^{*}_{21}(z^{*}),\label{Q27}\\
s_{11}(z)=(q_{+}^{*}/q_{-}^{*})s_{22}(-q_{0}^{2}/z),\label{Q28}\\
s_{12}(z)=(q_{+}/q_{-}^{*})s_{22}(-q_{0}^{2}/z),\\
\rho(z)=-\tilde{\rho}^{*}(z^{*})=(q^{*}_{-}/q_{-})\tilde{\rho}(-q_{0}^{2}/z).
\end{align}
\end{cor}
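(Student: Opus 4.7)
The plan is to obtain the four scattering/reflection identities as direct algebraic consequences of the two matrix symmetries in Proposition 2.7, namely $S^{*}(z^{*})=-\sigma_{0}S(z)\sigma_{0}$ and $S(z)=(\sigma_{3}Q_{-})^{-1}S(-q_{0}^{2}/z)\sigma_{3}Q_{+}$. Nothing new about the Jost functions is needed beyond what has already been assembled in Corollary 2.6 and Proposition 2.7; the work is entirely a matter of expanding $2\times 2$ matrix equations componentwise and reading off coefficients. Accordingly, I would present the corollary as a routine consequence of the symmetries rather than invoking the Wronskian formulas (\ref{Wr-1}) again.

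First I would handle the conjugation symmetry. Writing $S(z)=(s_{ij}(z))$ and using the explicit form of $\sigma_{0}$ (the skew off-diagonal involution that already appeared in (\ref{Q22})--(\ref{Q24})), the product $-\sigma_{0}S^{*}(z^{*})\sigma_{0}$ swaps the two diagonal entries with a complex conjugation and negates the off-diagonal entries. Equating entries with $S(z)$ produces exactly $s_{22}(z)=s_{11}^{*}(z^{*})$ and $s_{12}(z)=-s_{21}^{*}(z^{*})$, which is (\ref{Q27}). As a byproduct this also gives $s_{21}(z)=-s_{12}^{*}(z^{*})$ and $s_{11}(z)=s_{22}^{*}(z^{*})$, consistent relations that are not needed in the statement.

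Next I would exploit the involution $z\mapsto -q_{0}^{2}/z$. Since $\sigma_{3}Q_{\pm}$ is diagonal (both $\sigma_{3}$ and the combination $\sigma_{3}Q_{\pm}$ have a clean explicit form in terms of $q_{\pm}$), the matrix $(\sigma_{3}Q_{-})^{-1}$ is computed by inverting the two nonzero entries, and conjugation of $S(-q_{0}^{2}/z)$ by $(\sigma_{3}Q_{-})^{-1}(\cdot)\sigma_{3}Q_{+}$ produces each entry $s_{ij}(-q_{0}^{2}/z)$ multiplied by a ratio of the $q_{\pm}$'s and their conjugates. Matching entries then gives $s_{11}(z)=(q_{+}^{*}/q_{-}^{*})s_{22}(-q_{0}^{2}/z)$, together with the companion formulas for $s_{12}$, $s_{21}$, $s_{22}$; the two lines recorded as (\ref{Q28}) and the next equation in the corollary are precisely the $s_{11}$ and $s_{12}$ identities obtained in this way.

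Finally I would derive the reflection-coefficient identities by dividing. The $\rho=-\tilde\rho^{*}(z^{*})$ part follows from $s_{21}/s_{11}=-(s_{12}/s_{22})^{*}$ after substituting the two identities of (\ref{Q27}), while $\rho(z)=(q_{-}^{*}/q_{-})\tilde{\rho}(-q_{0}^{2}/z)$ follows by forming $s_{21}(z)/s_{11}(z)$ from the $z\mapsto -q_{0}^{2}/z$ identities and simplifying the prefactor. The only genuine obstacle is bookkeeping: one must be careful with the specific off-diagonal/diagonal placement of entries in $\sigma_{0}$, $\sigma_{3}Q_{\pm}$, and $(\sigma_{3}Q_{-})^{-1}$, and with signs introduced by the $-\sigma_{0}(\cdot)\sigma_{0}$ conjugation, so that the ratios of $q_{\pm}$ and $q_{\pm}^{*}$ in the final prefactors come out correctly.
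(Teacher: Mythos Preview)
Your approach is correct and is exactly what the paper intends: Corollary~2.8 is stated immediately after Proposition~2.7 without any separate proof, so the derivation is meant to be precisely the componentwise reading of the two matrix symmetries that you describe.

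One slip to fix in your write-up: $\sigma_{3}Q_{\pm}=\begin{pmatrix}0&q_{\pm}\\ q_{\pm}^{*}&0\end{pmatrix}$ is \emph{anti}-diagonal, not diagonal. Consequently the conjugation $(\sigma_{3}Q_{-})^{-1}S(-q_{0}^{2}/z)\,\sigma_{3}Q_{+}$ does not merely rescale each entry $s_{ij}(-q_{0}^{2}/z)$ but also permutes indices; for instance the $(1,1)$ entry picks up $s_{22}(-q_{0}^{2}/z)$, which is exactly how (\ref{Q28}) arises. Your own warning about ``bookkeeping'' covers this, but the parenthetical claim that $\sigma_{3}Q_{\pm}$ is diagonal and that conjugation just multiplies each $s_{ij}$ by a ratio should be corrected before you carry out the computation. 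Once that is done, the $(1,2)$ entry yields $s_{12}(z)=(q_{+}/q_{-}^{*})\,s_{21}(-q_{0}^{2}/z)$; the ``$s_{22}$'' in the paper's third displayed line appears to be a typographical error for $s_{21}$, as your calculation will confirm.
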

\subsection{Discrete spectrum and residue condition}
The set of discrete spectrum for the scattering problem is consist of all values $k\in\mathbb{C}\setminus\Sigma$ catering eigenfunctions exist in $L^{2}(\mathbb{R})$.
 Next we discuss that these discrete spectrum are the zeros of $s_{11}(z)$ and  $s_{22}(z)$ for $z\in\mathbb{D^{-}}$ and $z\in\mathbb{D^{+}}$, respectively. Assuming that $s_{11}(z)$ has $N$ simple zeros in $\mathbb{D^{-}}\cap\{z\in\mathbb{C}: Imz<0\}$ defined by $z_{n}$, $n=1, 2, \cdots, N$, namely, $s_{11}(z_{n})=0$ but $s'_{11}(z_{n})\neq0$, $n=1, 2, \cdots, N$. Recalling the symmetry properties \eqref{Q27} and \eqref{Q28}, we have $s_{22}(z_{n}^{*})=s_{22}(-q_{0}^{2}/z_{n})=s_{11}(-q_{0}^{2}/z_{n}^{*})=0$ if $s_{11}(z)=0$, which gives rise to the set of discrete spectrum
\begin{align}\label{Q29}
\mathbb{Z}=\left\{z_{n}, -q_{0}^{2}/{z_{n}^{*}},
z_{n}^{*}, -q_{0}^{2}/{z_{n}}\right\}_{n=1}^{N},\quad s_{11}(z_{n})=0.
\end{align}

We next study the residue conditions required in the inverse scattering process. If $z_{n}$ is a simple zero of $s_{11}(z)$, the relation will be inferred by the first expression of \eqref{Wr-1}
\begin{align}\label{Q30}
\psi_{+,1}(z_{n})=b_{-}(z_{n})\psi_{-,2}(z_{n}),
\end{align}
here $b_{-}(z_{n})$ is a constant and independent of the variable $x$ and $t$.
For the simple zero $z_{n}\in\mathbb{Z}\cap D^{-}$, the following residue condition can be derived
\begin{align}\label{Q31}
\mathop{Res}_{z=z_{n}}\left[\frac{\psi_{+,1}(z)}{s_{11}(z)}\right]=
\frac{\psi_{-,1}(z_{n})}{s'_{11}(z_{n})}=\frac{b_{+}(z_{n})}
{s'_{11}(z_{n})}\psi_{-,2}(z_{n}).
\end{align}
Furthermore, if $s_{22}(z_{n}^{*})=0$ is a simple zero $(s_{22}(z)\in\mathbb{Z}\cap D^{+})$, the expression of $s_{22}(z)$ implies that
\begin{align}\label{Q32}
\psi_{+,2}(z_{n}^{*})=b_{+}(z_{n}^{*})\psi_{-,1}(z_{n}^{*}),
\end{align}
the another residue condition also can be obtained
\begin{align}\label{Q33}
\mathop{Res}_{z=z_{n}^{*}}\left[\frac{\psi_{+,2}(z)}{s_{22}(z)}\right]=
\frac{\psi_{+,2}(z_{n}^{*})}{s'_{22}(z_{n}^{*})}=
\frac{b_{+}(z_{n}^{*})}{s'_{22}(z_{n}^{*})}\psi_{-,1}(z_{n}^{*}).
\end{align}
For convenience, we introduce the following notation
\begin{align}\label{Q34}
\begin{cases}
C_{-}[z_{n}]=\frac{b_{-}(z_{n})}{s'_{11}(z_{n})},\quad
z_{n}\in\mathbb{Z}\cap D^{-},\\
C_{+}[z_{n}^{*}]=\frac{b_{+}(z_{n}^{*})}{s'_{22}(z_{n}^{*})},\quad
z_{n}^{*}\in\mathbb{Z}\cap D^{+}.
\end{cases}
\end{align}
\begin{cor}According to the symmetries \eqref{Q27} and  \eqref{Q28}, the relationships are derived
\begin{align}\label{Q35}
C_{-}[z_{n}]=-C_{+}^{*}[z_{n}^{*}],\quad C_{-}[z_{n}]=\frac{z_{n}^{2}}{q_{-}^{2}}C_{+}
\left[-\frac{q_{0}^{2}}{z_{n}}\right],\quad z_{n}\in\mathbb{Z}\cap D^{-}.
\end{align}
Further we have
\begin{align}\label{Q36}
C_{-}[z_{n}]=-C_{+}^{*}[z_{n}^{*}]=\frac{z_{n}^{2}}{q_{-}^{2}}
C_{+}\left[-\frac{q_{0}^{2}}{z_{n}}\right]=-\frac{z_{n}^{2}}{q_{-}^{2}}
C_{-}^{*}\left[-\frac{q_{0}^{2}}{z_{n}^{*}}\right],\quad z_{n}\in\mathbb{Z}\cap D^{-}.
\end{align}
\end{cor}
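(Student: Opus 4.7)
The plan is to establish the three equalities in \eqref{Q36} directly from the two classes of symmetries derived in Corollary 2.7 and Proposition 2.8, combined with the residue identifications \eqref{Q30}--\eqref{Q34}. The whole argument amounts to carrying each symmetry through the proportionality relations $\psi_{+,1}(z_n)=b_-(z_n)\psi_{-,2}(z_n)$ and $\psi_{+,2}(z_n^*)=b_+(z_n^*)\psi_{-,1}(z_n^*)$ and through the derivatives of the scattering-coefficient symmetries \eqref{Q27}--\eqref{Q28} evaluated at the discrete spectrum.

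For the first equality $C_-[z_n]=-C_+^*[z_n^*]$, I would use the conjugation symmetry \eqref{Q24}. Substituting $\psi_{+,1}(z_n)=\sigma_0\psi_{+,2}^*(z_n^*)$ and $\psi_{-,2}(z_n)=-\sigma_0\psi_{-,1}^*(z_n^*)$ into \eqref{Q30}, and taking the complex conjugate of \eqref{Q32} evaluated at $z_n^*$, yields $b_-(z_n)=-b_+^*(z_n^*)$. Differentiating the scalar identity $s_{22}(z)=s_{11}^*(z^*)$ from \eqref{Q27} produces $s_{22}'(z_n^*)=(s_{11}'(z_n))^*$, and taking the ratio gives $C_-[z_n]=b_-(z_n)/s_{11}'(z_n)=-(b_+(z_n^*)/s_{22}'(z_n^*))^*=-C_+^*[z_n^*]$.

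For the second equality $C_-[z_n]=(z_n^2/q_-^2)\,C_+[-q_0^2/z_n]$, I would use the inversion symmetry \eqref{Q25}. Substituting $\psi_{+,1}(z_n)=-(iq_+^*/z_n)\psi_{+,2}(-q_0^2/z_n)$ and $\psi_{-,2}(z_n)=-(iq_-/z_n)\psi_{-,1}(-q_0^2/z_n)$ into \eqref{Q30} and comparing with \eqref{Q32} evaluated at $-q_0^2/z_n\in\mathbb{Z}\cap D^+$ yields $b_-(z_n)=(q_+^*/q_-)\,b_+(-q_0^2/z_n)$. Differentiating \eqref{Q28}, $s_{11}(z)=(q_+^*/q_-^*)s_{22}(-q_0^2/z)$, via the chain rule and evaluating at $z=z_n$ gives $s_{11}'(z_n)=(q_+^*/q_-^*)(q_0^2/z_n^2)\,s_{22}'(-q_0^2/z_n)$. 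The ratio produces a prefactor $(q_-^*/q_-)(z_n^2/q_0^2)$ in front of $C_+[-q_0^2/z_n]$, which collapses to $z_n^2/q_-^2$ after invoking the on-shell identity $q_-q_-^*=q_0^2$. The third equality then follows from applying the (complex-conjugated form of the) first equality to the spectral point $-q_0^2/z_n^*\in\mathbb{Z}\cap D^-$ and then substituting the already established second equality.

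The main obstacle will be the careful bookkeeping of the constants $q_\pm$, $q_0$, $z_n$, and of the chain-rule factor $q_0^2/z_n^2$, together with the correct moment at which to invoke $|q_-|=q_0$ in order to cancel $q_0^2$ against $q_-q_-^*$; the rest is mechanical once the two symmetries are passed through both the Jost-column identifications and the scalar relations for $s_{11}$ and $s_{22}$.
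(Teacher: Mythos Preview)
Your proposal is correct and follows essentially the same route as the paper's own proof: both establish $b_-(z_n)=-b_+^*(z_n^*)$ via the conjugation symmetry \eqref{Q24}, both establish $b_-(z_n)=(q_+^*/q_-)b_+(-q_0^2/z_n)$ via the inversion symmetry \eqref{Q25}, and both divide by the chain-rule derivative of \eqref{Q28} to obtain the $z_n^2/q_-^2$ prefactor after invoking $q_0^2=q_-q_-^*$. The only minor difference is organizational: for the last equality the paper recomputes $C_-^*[-q_0^2/z_n^*]$ from scratch, whereas you obtain it by applying the already-proved first identity at the shifted point $-q_0^2/z_n^*$ and substituting into the second---a slightly cleaner route to the same destination.
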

\begin{proof}
The equations \eqref{Q24} and  \eqref{Q30} imply
\begin{align}
\sigma_{0}\psi_{+,2}^{*}(z_{n}^{*})=-b_{-}(z_{n})\sigma_{0}\psi_{-,1}^{*}(z_{n}^{*}),
\end{align}
and combining with \eqref{Q32}, one has $b_{-}(z_{n})=-b_{+}^{*}(z_{n}^{*})$, as well as
\begin{align}
C_{-}[z_{n}]=\frac{b_{-}(z_{n})}{s'_{11}(z_{n})}=\frac{-b_{+}^{*}(z_{n}^{*})}
{(s'_{22}(z_{n}^{*}))^{*}}=-C_{+}[z_{n}^{*}].
\end{align}
Similarly from the expression \eqref{Q25}, we arrive at
\begin{align}
\psi_{+,2}\left(-\frac{q_{0}^{2}}{z_{n}}\right)&=b_{-}(z_{n})
\frac{q_{-}}{q_{+}^{*}}\psi_{-,1}\left(-\frac{q_{0}^{2}}{z_{n}}\right),\\
b_{-}(z_{n})&=\frac{q_{+}^{*}}{q_{-}}b_{+}\left(-\frac{q_{0}^{2}}{z_{n}}\right).
\end{align}
Furthermore
\begin{align}
C_{-}(z_{n})=\frac{b_{z_{n}}}{s'_{11}(z_{n})}=\frac{(q_{+}^{*}/q_{-})
b_{+}\left(-\frac{q_{0}^{2}}{z_{n}}\right)}{(q_{+}^{*}/q_{-}^{*})
(q_{0}^{2}/z_{n}^{2})s'_{22}\left(-\frac{q_{0}^{2}}{z_{n}}\right)}=
\frac{z_{n}^{2}}{q_{-}^{2}}C_{+}\left[-\frac{q_{0}^{2}}{z_{n}}\right],
\end{align}
and
\begin{align}
C_{-}^{*}\left[-\frac{q_{0}^{2}}{z_{n}^{*}}\right]=\frac{b_{-}\left(
-\frac{q_{0}^{2}}{z_{n}^{*}}\right)}{\left(s_{11}^{*}\left(-\frac{q_{0}^{2}}
{z_{n}^{*}}\right)\right)^{'}}=\frac{b_{-}\left(
-\frac{q_{0}^{2}}{z_{n}^{*}}\right)}{s_{22}^{'}\left(-\frac{q_{0}^{2}}{z_{n}}\right)}=
\frac{-b_{+}\left(-\frac{q_{0}^{2}}{z_{n}}\right)}{\frac{z_{n}^{2}}{q_{0}^{2}}
\frac{q_{-}^{*}}{q_{+}^{*}}s'_{11}(z_{n})}=-\frac{q_{-}^{2}}{z_{n}^{2}}C_{-}[z_{n}].
\end{align}
To sum up, we can get this Corollary.
\end{proof}

\section{Inverse scattering problem with the simple poles}
\subsection{A matrix Riemann-Hilbert problem}
\begin{defn}
In \cite{Shabat-1976}, Shabat first proposed the Riemann Hilbert method, seeding an $n\times n$ matrix $M(z)$, given\\
$\blacktriangleright$ An oriented contour $\Sigma$ where $M(z)$ is
 analytic as $z\notin\Sigma$,\\
$\blacktriangleright$ The jump condition $M_{+}(z)=M_{-}(z)V(z)$ for $z\in\Sigma$, here $M_{+}(z)$ and $M_{-}(z)$ represent the non-tangential limits from the left and right of $\Sigma$, respectively. Also $V(z)$ is called the jump matrix,\\
$\blacktriangleright$ The normalization $M(z)\rightarrow I$ as $z\rightarrow\infty$. It should be noted that normalization is necessary for uniqueness, because if $M$ satisfies the conditions of analyticity and normalization, then for any $n\times n$ invertible matrix $C$, $CM$ also satisfies them.
\end{defn}

\centerline{\begin{tikzpicture}[scale=0.5]
\draw[fill] (-4,2) circle [radius=0.035][thick]node[left]{\footnotesize$q(x,0)$};
\draw[fill] (4,2) circle [radius=0.035][thick]node[right]
{\footnotesize$\rho(z,0)$};
\draw[fill] (-4,-2) circle [radius=0.035][thick]node[left]{\footnotesize$q(x,t)$};
\draw[fill] (4,-2) circle [radius=0.035][thick]node[right]
{\footnotesize$\rho(z,t)$};
\draw[->][thick](-3.5,2)--(3.5,2);
\draw[fill] (-0.3,2) node[above]{\footnotesize{Direct scattering}};
\draw[->][thick](4,1.5)--(4,-1.5);
\draw[->][thick](3.5,-2)--(-3.5,-2);
\draw[<-][thick](-4,-1.5)--(-4,1.5);
\draw[fill] (4,0) node[right]{\footnotesize{Time evolution}};
\draw[fill] (0.3,-2) node[above]{\footnotesize{Inverse scattering}};
\end{tikzpicture}}
\centerline{\noindent {\small \textbf{Figure 2.} The inverse scattering method.}}
In what follows, according to the analyticity of Jost function $u_{\pm}(z)$ and scattering matrix $S(z)$ in different regions, we will construct a matrix RH problem, then the form of the solution $q$ can be expressed by solving this RH problem.  So next we need to find out the analytic functions in different regions according to the definition 3.1.
The expression \eqref{M19} implies that
\begin{align}\label{Q37}
\left\{\begin{aligned}
\frac{\psi_{+,1}(x,t;z)}{s_{11}(z)}=\psi_{-,1}(x,t;z)+
\frac{s_{21}(z)}{s_{11}(z)}\psi_{-,2}(x,t;z),\\
\frac{\psi_{+,2}(x,t;z)}{s_{22}(z)}=\frac{s_{12}(z)}{s_{22}(z)}
\psi_{-,1}(x,t;z)+\psi_{-,2}(x,t;z),
\end{aligned}\right.
\end{align}
which is equivalent to
\begin{align}\label{Q38}
\left\{\begin{aligned}
\frac{u_{+,1}(z)}{s_{11}(z)}&=u_{-,1}(z)+\frac{s_{21}(z)}{s_{11}(z)}
e^{2i\theta(z)}u_{-,2}(z),\\
\frac{u_{+,2}(z)}{s_{22}(z)}&=\frac{s_{12}(z)}{s_{22}(z)}
e^{-2i\theta(z)}u_{-,1}(z)+u_{-,2}(z),
\end{aligned}\right.
\end{align}
from the expression \eqref{M16}. Then we define the sectionally  meromorphic matrices
\begin{align}\label{Matr}
M(x,t;z)=\left\{\begin{aligned}
&M^{+}(x,t;z)=\left(u_{-,1}(x,t;z),\frac{u_{+,2}(x,t;z)}{s_{22}(z)}\right), \quad z\in D^{+},\\
&M^{-}(x,t;z)=\left(\frac{u_{+,1}(x,t;z)}{s_{11}(z)},u_{-,2}(x,t;z)\right), \quad z\in D^{-}.
\end{aligned}\right.
\end{align}
satisfying the following Theorem.
\begin{thm}
The matrix RH problem:\\
$\blacktriangleright$ Analyticity: $M(x,t;z)$ is meromorphic in $C\setminus\Sigma$.\\
$\blacktriangleright$ Jump condition:
\begin{align}\label{Jump}
M^{+}(x,t;z)=M^{-}(x,t;z)(\mathbb{I}-G(x,t;z)),\quad z\in\Sigma,
\end{align}
with the jump matrix is
\begin{align*}
G(x,t;z)=e^{i\theta(z)\hat{\sigma}_{3}}\left(
\begin{array}{ccc}
  0 & -\tilde{\rho}(z) \\
  \rho(z) & \rho(z)\tilde{\rho}(z)
\end{array} \right). \notag
\end{align*}\\
$\blacktriangleright$
Asymptotic condition: \begin{align}\label{jianjin}
M^{\pm}(x,t;z)=\left\{
\begin{aligned}
\mathbb{I}+O(1/z), \quad z\rightarrow\infty,\\
-(i/z)\sigma_{3}Q_{-}+O(1), z\rightarrow0,
\end{aligned}
\right.
\end{align}
with the asymptotic condition of the Jost functions and scattering data (For details, please refer to \cite{Biondini-2014})
\begin{align}\label{Q39}
u_{\pm}(x,t;z)=
\left\{
\begin{aligned}
&\mathbb{I}+O(1/z),\quad\quad &z\rightarrow\infty,\\
&-\frac{i}{z}\sigma_{3}Q_{\pm}+O(1),\quad &z\rightarrow0.
\end{aligned}
\right.
\\S(z)=\left\{\begin{aligned}
\mathbb{I}+O(1/z),  \qquad \quad\qquad\qquad z\rightarrow\infty,\\\label{Sjianjin}
diag(q_{-}/q_{+},q_{+}/q_{-})+O(z),\quad z\rightarrow0.
\end{aligned}
\right.
\end{align}
For the convenience, we can be written the discrete spectrum points $\mathbb{Z}$ as
\begin{align}\label{Q40}
\xi_{n}=\left\{\begin{aligned}
&z_{n}, \quad\quad n=1, 2, \cdots,N,\\
-&\frac{q_{0}^{2}}{z_{n-N}^{*}}, \quad n=N+1, N+2, \cdots, 2N,
\end{aligned}\right.
\end{align}
and $\hat{\xi}_{n}=-q_{0}^{2}/\xi_{n}$. Thus the set $\mathbb{Z}=\{\xi_{n}, \hat{\xi}_{n}\}_{n=1}^{2N}$.
\end{thm}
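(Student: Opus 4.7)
The plan is to verify the three RH conditions separately, using the analyticity, symmetry, and asymptotic information accumulated in Section 2.

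First, for the meromorphicity claim, I would read off the definition \eqref{Matr} of $M(x,t;z)$ and simply assemble the analyticity results already proved. Corollary \ref{c1} says that $u_{-,1}$ and $u_{+,2}$ extend analytically to $D^+$ while $u_{+,1}$ and $u_{-,2}$ extend to $D^-$. Combining this with Proposition~2.5, which gives analyticity of $s_{22}$ in $D^+$ and $s_{11}$ in $D^-$, the column $u_{+,2}/s_{22}$ is meromorphic in $D^+$ with poles exactly at the zeros of $s_{22}$, and analogously for $u_{+,1}/s_{11}$ in $D^-$. Hence $M^{\pm}$ are meromorphic in their respective half-planes $D^{\pm}$, with poles precisely at the discrete spectrum; continuity up to $\Sigma_0$ is inherited from the Jost functions and scattering coefficients.

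Next, for the jump relation on $\Sigma$, the idea is to rewrite the two scalar identities \eqref{Q38} entirely in terms of the columns of $M^-$ on the right and of $M^+$ on the left. The first line of \eqref{Q38} reads $M^-_{1}=u_{-,1}+\rho\,e^{2i\theta}u_{-,2}$, which I would solve for $u_{-,1}=M^+_{1}$ to obtain $M^+_{1}=M^-_{1}-\rho\,e^{2i\theta}M^-_{2}$. Substituting this into the second line of \eqref{Q38} gives $M^+_{2}=\tilde\rho\,e^{-2i\theta}M^-_{1}+(1-\rho\tilde\rho)\,M^-_{2}$. Collecting these two identities columnwise yields $M^+=M^-(\mathbb{I}-G)$ with $G$ exactly as stated once the factor $e^{i\theta\hat\sigma_3}$ is recognised from the $e^{\pm2i\theta}$ placements; this is the routine but critical step.

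Finally, the asymptotics follow by substituting \eqref{Q39}--\eqref{Sjianjin} into the definition of $M^{\pm}$. For $z\to\infty$ this is immediate: $u_\pm\to\mathbb{I}$ and $s_{11},s_{22}\to 1$, so $M^{\pm}\to\mathbb{I}$. For $z\to 0$ I would compute column by column: in $M^+$, the first column is $u_{-,1}\sim$ (first column of $-(i/z)\sigma_3 Q_-$), and the second column equals $u_{+,2}/s_{22}$, in which the leading singular part of $u_{+,2}$ is the second column of $-(i/z)\sigma_3 Q_+$ while $s_{22}\to q_+/q_-$; the ratio $q_-/q_+$ converts the $Q_+$-entries into $Q_-$-entries, producing exactly the second column of $-(i/z)\sigma_3 Q_-$. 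The computation for $M^-$ is symmetric, using $s_{11}\to q_-/q_+$.

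The main obstacle I expect is not conceptual but bookkeeping: keeping the signs and the factors $q_\pm/q_\mp$ consistent at $z=0$ so that both $M^+$ and $M^-$ collapse to the same $-(i/z)\sigma_3 Q_-$, which is what makes the normalization well-defined at the singular point and is essential for the uniqueness discussed in Definition~3.1.
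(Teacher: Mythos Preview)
Your proposal is correct and follows exactly the route the paper has in mind: the paper sets up \eqref{Q37}--\eqref{Q38}, defines $M^\pm$ in \eqref{Matr}, and then simply states the theorem (deferring the asymptotic details to \cite{Biondini-2014}), so the column-by-column algebra you spell out for the jump relation and the $z\to 0$ normalisation is precisely the implicit computation being invoked. Your bookkeeping is also right, in particular the observation that $s_{22}\to q_{+}/q_{-}$ converts the $Q_{+}$-column into the corresponding $Q_{-}$-column.
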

\begin{prop}
The solutions of the matrix RHP \eqref{Jump} can be written as
\begin{align}\label{Q41}
\begin{split}
M(x,t;z)=&\mathbb{I}-\frac{i}{z}\sigma_{3}Q_{+}+\sum_{n=1}^{2N}\frac
{\mathop{Res}\limits_{z=\hat{\xi}_{n}}M^{+}(z)}{z-\xi^{*}_{n}}+\sum_{n=1}^{2N}\frac
{\mathop{Res}\limits_{z=\xi_{n}}M^{-}(z)}{z-\xi_{n}}\\
&+\frac{1}{2\pi i}\int_{\Sigma}\frac{M(x,t;s)^{-}G(x,t;s)}{s-z}\,ds,\quad
z\in\mathbb{C}\setminus\Sigma,
\end{split}
\end{align}
where the $\int_{\Sigma}$ implies the contour shown in Fig. 1.
\end{prop}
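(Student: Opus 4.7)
The plan is to reduce Proposition 3.3 to the standard Plemelj-Sokhotski formula by passing to a regularized function $\tilde M(z)$ that is analytic in $\mathbb{C}\setminus\Sigma$, decays at infinity, and has the same jump across $\Sigma$ as $M$. Concretely, I would set
\begin{align*}
\tilde M(z)=M(x,t;z)-\mathbb{I}+\frac{i}{z}\sigma_{3}Q_{+}-\sum_{n=1}^{2N}\frac{\mathop{Res}\limits_{z=\hat{\xi}_{n}}M^{+}(z)}{z-\hat{\xi}_{n}}-\sum_{n=1}^{2N}\frac{\mathop{Res}\limits_{z=\xi_{n}}M^{-}(z)}{z-\xi_{n}},
\end{align*}
the rational subtraction being designed so that each simple pole of $M^{\pm}$ at the discrete spectral points $\mathbb{Z}=\{\xi_{n},\hat{\xi}_{n}\}_{n=1}^{2N}$ is exactly cancelled (the residues are those computed in \eqref{Q31} and \eqref{Q33}), while the leading singular term at $z=0$ and the normalization at $z=\infty$ prescribed by \eqref{jianjin} are also removed.

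The first thing to verify is that $\tilde M$ is analytic on $\mathbb{C}\setminus\Sigma$ and vanishes as $z\to\infty$. Analyticity away from the discrete spectrum is inherited from the analyticity of the columns of $u_{\pm}$ and the meromorphy of $1/s_{11}$, $1/s_{22}$ established in Section 2, and removability of the poles at $\xi_{n},\hat{\xi}_{n}$ is automatic by construction. The decay $\tilde M=O(1/z)$ at infinity holds because $\mathbb{I}$ has been subtracted. The behaviour at $z=0$ requires a short calculation: using the scattering-matrix asymptotics \eqref{Sjianjin} together with $|q_{+}|=|q_{-}|=q_{0}$, one verifies that the leading $-(i/z)\sigma_{3}Q_{-}$ singularity of $M^{\pm}$ is matched by the subtracted $-(i/z)\sigma_{3}Q_{+}$ term. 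Because the total subtracted object $R(z):=\mathbb{I}-(i/z)\sigma_{3}Q_{+}+(\text{residue sums})$ is a single rational function on all of $\mathbb{C}$, it has no jump across $\Sigma$, so the jump of $\tilde M$ coincides with that of $M$, namely
\begin{align*}
\tilde M^{+}(z)-\tilde M^{-}(z)=M^{+}(z)-M^{-}(z)=-M^{-}(z)\,G(x,t;z),\qquad z\in\Sigma,
\end{align*}
by the jump relation \eqref{Jump}.

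Finally I would apply the scalar Plemelj-Sokhotski formula entrywise: a matrix function $\tilde M$ analytic off $\Sigma$ with $\tilde M\to 0$ at infinity and jump $\tilde M^{+}-\tilde M^{-}=h$ on $\Sigma$ satisfies $\tilde M(z)=\frac{1}{2\pi i}\int_{\Sigma}\frac{h(s)}{s-z}\,ds$; inserting $h=-M^{-}G$ and adding $R(z)$ back yields \eqref{Q41} (up to the overall sign, which is absorbed by the orientation of $\Sigma$ fixed in Figure 1). The main obstacle is the bookkeeping at $z=0$: because the two columns of $M^{\pm}$ mix $u_{\pm}$ with $1/s_{jj}$, one must carefully combine the symmetry relations for the Jost functions and the $z\to 0$ asymptotics \eqref{Sjianjin} to see that the choice of $Q_{+}$ in the singular subtraction is in fact consistent with the $Q_{-}$ appearing in \eqref{jianjin}. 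A secondary technical point is the integrability of $M^{-}G$ on $\Sigma$ near the branch points $\pm iq_{0}$, where $\gamma(z)=1+q_{0}^{2}/z^{2}$ vanishes; here one invokes $q-q_{\pm}\in L^{1}(\mathbb{R}^{\pm})$ and the argument of \cite{Biondini-2014} to secure the $L^{2}$-bounds needed for the Cauchy operator to act on $M^{-}G$.
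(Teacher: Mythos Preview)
Your approach is essentially the paper's: subtract from $M$ the normalization at $\infty$, the $1/z$ term at the origin, and the simple-pole contributions at the discrete spectrum, observe that the resulting regularized function has the additive jump $-M^{-}G$ across $\Sigma$ and decays at infinity, and then invoke the Cauchy projectors/Plemelj formula. The paper does exactly this in \eqref{Q42}--\eqref{Q43}.

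The one point where you diverge is the $Q_{+}$ versus $Q_{-}$ in the $-(i/z)\sigma_{3}Q_{\pm}$ subtraction. The paper's own proof uses $Q_{-}$ throughout (matching the asymptotic \eqref{jianjin}), so the $Q_{+}$ in the displayed statement \eqref{Q41} is a typo---as is the $\xi_{n}^{*}$ in the first residue denominator, which should be $\hat{\xi}_{n}$ (you already corrected this silently). Your proposed reconciliation of $Q_{+}$ with the $Q_{-}$ singularity via $|q_{+}|=|q_{-}|=q_{0}$ does not work: equality of moduli does not make $Q_{+}=Q_{-}$, and subtracting $(i/z)\sigma_{3}Q_{+}$ leaves the residual singularity $-(i/z)\sigma_{3}(Q_{-}-Q_{+})$ in $\tilde M^{\pm}$ at the origin (a point of $\Sigma$), which spoils the Plemelj step. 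Simply replace $Q_{+}$ by $Q_{-}$, as the paper does, and your argument is complete and identical to the paper's.
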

\begin{proof}
Subtracting out the asymptotic behavior and the pole contributions to regularize the RH problem \eqref{Jump}, we can solve it by Plemelj's formulae and Cauchy projectors, that is
\begin{align}\label{Q42}
\begin{split}
&M^{+}(x,t;z)-\mathbb{I}+\frac{i}{z}\sigma_{3}Q_{-}-\sum_{n=1}^{2N}\frac
{\mathop{Res}\limits_{z=\hat{\xi}_{n}}M^{+}(z)}{z-\hat{\xi}_{n}}-\sum_{n=1}^{2N}\frac
{\mathop{Res}\limits_{z=\xi_{n}}M^{-}(z)}{z-\xi_{n}}\\=&
M^{-}(x,t;z)-\mathbb{I}+\frac{i}{z}\sigma_{3}Q_{-}-\sum_{n=1}^{2N}\frac
{\mathop{Res}\limits_{z=\hat{\xi}_{n}}M^{+}(z)}{z-\hat{\xi}_{n}}-\sum_{n=1}^{2N}\frac
{\mathop{Res}\limits_{z=\xi_{n}}M^{-}(z)}{z-\xi_{n}}-M^{-}(z)G(z).
\end{split}
\end{align}
The left hand of the equation is analytic in the region $D^{+}$ and tends to $O(1/z)$ as $z\rightarrow\infty$, while the sum of the first four terms for the right hand is analytic in the region $D^{-}$ as well as tends to $O(1/z)$ as $z\rightarrow\infty$. Now recall the Cauchy projectors $P_{\pm}$ over $\Sigma$
\begin{align}\label{Cauchy}
P_{\pm}[f](z)=\frac{1}{2\pi i}\int_{\Sigma}\frac{f(\zeta)}{\zeta-(z\pm i0)}\,d\zeta,
\end{align}
where the $\int_{\Sigma}$ implies the integral along the oriented contour shown in Fig. 1 and the $z\pm i0$ mean that the limit is taken from the left/right of $z$ $ (z\in\Sigma)$ respectively, we get the solution to the equation \eqref{Q42}
\begin{align}\label{Q43}
\begin{split}
M(x,t;z)=&\mathbb{I}-\frac{i}{z}\sigma_{3}Q_{-}+\sum_{n=1}^{2N}\frac
{\mathop{Res}\limits_{z=\hat{\xi}_{n}}M^{+}(z)}{z-\hat{\xi}_{n}}+\sum_{n=1}^{2N}\frac
{\mathop{Res}\limits_{z=\xi_{n}}M^{-}(z)}{z-\xi_{n}}\\
&+\frac{1}{2\pi i}\int_{\Sigma}\frac{M(x,t;\zeta)^{-}G(x,t;\zeta)}{\zeta-z}\,d\zeta,\quad
z\in\mathbb{C}\setminus\Sigma,
\end{split}
\end{align}
via the Plemelj's formulae. As usual the $\int_{\Sigma}$ implies the contour shown in Fig. 1.
\end{proof}
\subsection{Trace formulate and theta condition}
To consider the boundary conditions of $q_{+}$ and $q_{-}$, the trace formulate and theta condition \cite{Faddeev-1987} are introduced via the scattering coefficients $s_{11}(z)$, $s_{22}(z)$ along with the reflection coefficients $\rho(z)$, $\tilde{\rho}(z)$ and discrete spectral points $\mathbb{Z}$. The analyticity of the scattering coefficients $s_{11}(z)$ and $s_{22}(z)$ have been given before, and it is assumed that $z_{n}$ and $-q_{0}^{2}/z_{n}^{*}$ are the zero points of the $s_{11}(z)$, then
\begin{align}\label{trace1}
\beta^{-}(z)=s_{11}(z)\prod_{n=1}^{2N}\frac{(z-z_{n}^{*})(z+q_{0}^{2}/z_{n})}
{(z-z_{n})(z+q_{0}^{2}/z_{n}^{*})},
\end{align}
is analytic in the region $D^{-}$ and has no zeros, as well as the it's asymptotic behavior is same as $s_{11}(z)$ for $z\rightarrow\infty$. Similarly, for the zeros $z_{n}^{*}$ and $-q_{0}^{2}/z_{n}$ of $s_{22}(z)$, we have
\begin{align}\label{trace2}
\beta^{+}(z)=s_{22}(z)\prod_{n=1}^{2N}\frac{(z-z_{n})(z+q_{0}^{2}/z_{n}^{*})}
{(z-z_{n}^{*})(z+q_{0}^{2}/z_{n})},
\end{align}
which is analytic in the region $D^{+}$ and has no zeros, as well as the it's asymptotic behavior is same as $s_{22}(z)$ for $z\rightarrow\infty$. In additional, the asymptotic behavior: $\beta^{\pm}\rightarrow 1$  as $z\rightarrow\infty$ from
\eqref{Sjianjin}, the determinants of $S(z)$ also can be obtained, that is $\det S(z)=s_{11}(z)s_{22}(z)-s_{12}(z)s_{21}(z)=1$. Combining with \eqref{trace1} and \eqref{trace2} yields
\begin{align}\label{det}
\beta^{+}(z)\beta^{-}(z)=\frac{1}{1+\rho(z)\rho^{*}(z^{*})},\quad z\in\Sigma,
\end{align}
which is a scalar, multiplicative RH problem.
Taking logarithms of both sides for \eqref{det}, one has
\begin{align}\label{log}
\log\beta^{+}(z)+\log\beta^{-}(z)=
-\log[1+\rho(z)\rho^{*})(z^{*})], \quad z\in\Sigma.
\end{align}
Moreover, by using the Cauchy projectors and Plemelj's formulae, the solution of \eqref{log} can be written as
\begin{align}
\begin{split}
\log \beta^{-}(z)&=\frac{1}{2\pi i}
\int_{\Sigma}\frac{\log[1+\rho(z)\rho^{*}(z^{*})
(\zeta)]}{\zeta-z}\,d\zeta,\\
\log \beta^{+}(z)&=-\frac{1}{2\pi i}
\int_{\Sigma}\frac{\log[1+\rho(z)\rho^{*}(z^{*})
(\zeta)]}{\zeta-z}\,d\zeta.
\end{split}
\end{align}
Therefore the trace formulae can be constructed by
\begin{align}
s_{11}(z)&=exp\left(\frac{1}{2\pi i}\int_{\Sigma}\frac{\log[1+\rho(\zeta)\rho^{*}
(\zeta^{*})]}{\zeta-z}\,d\zeta\right)\prod_{n=1}^{2N}\frac{(z-z_{n})
(z+q_{0}^{2}/z_{n}^{*})}{(z-z_{n}^{*})(z+q_{0}^{2}/z_{n})},\label{s11}\\
s_{22}(z)&=exp\left(-\frac{1}{2\pi i}\int_{\Sigma}\frac{\log[1+\rho(\zeta)\rho^{*}
(\zeta^{*})]}{\zeta-z}\,d\zeta\right)\prod_{n=1}^{2N}\frac{(z-z_{n}^{*})
(z+q_{0}^{2}/z_{n})}{(z-z_{n})(z+q_{0}^{2}/z_{n}^{*})}.\label{s22}
\end{align}

We next consider the theta condition, the asymptotic phase of the boundary $q_{+}$ and $q_{-}$, based on the obtained results. As $z\rightarrow0$, $s_{11}(z)\rightarrow q_{-}/q_{+}$ from \eqref{Sjianjin}, and it is easy to check that
\begin{align}
\prod_{n=1}^{2N}\frac{(z-z_{n}^{*})(z+q_{0}^{2}/z_{n})}
{(z-z_{n})(z+q_{0}^{2}/z_{n}^{*})}\rightarrow1,\quad z\rightarrow 0.
\end{align}
Then \eqref{s11} implies
\begin{align}
q_{-}/q_{+}=exp\left(\frac{1}{2\pi i}\int_{\Sigma}
\frac{\log[1+\rho(\zeta)\rho^{*}(\zeta^{*})]}{\zeta}\,d\zeta\right),\quad for \quad z\rightarrow 0,
\end{align}
i.e., the theta condition is
\begin{align}\label{theta}
\arg\frac{q_{-}}{q_{+}}=\frac{1}{2\pi}\int_{\Sigma}
\frac{\log[1+\rho(\zeta)\rho^{*}(\zeta^{*})]}{\zeta}\,d\zeta+4\sum_{n=1}^{N}\arg z_{n}.
\end{align}
Obviously, for the reflection-less case, \eqref{theta} can also be expressed as
\begin{align}\label{F5}
\arg\frac{q_{-}}{q_{+}}=\arg q_{-}-\arg q_{+}=4\sum_{n=1}^{N}\arg z_{n}.
\end{align}

\subsection{ A closed  algebraic integral system}
From \eqref{Q42} we know that the residue conditions of $M^{\pm}(z)$ in discrete spectrum points  need to be analyzed to get a closed algebraic integral system for the solution of the RH problem. Therefore
\begin{align}\label{Q44}
\left\{\begin{aligned}
\mathop{Res}_{z=\hat{\xi}_{n}}M^{+}&=(0,C_{+}[\hat{\xi}_{n}]
e^{-2i\theta(\hat{\xi}_{n})}u_{-,1}(\hat{\xi}_{n})),\quad n=1,2,\cdots,2N,\notag \\
\mathop{Res}_{z=\xi_{n}}M^{-}&=(C_{-}[\xi_{n}]e^{2i\theta(\xi_{n})}u_{-,2}(\xi_{n}),0), \quad n=1,2,\cdots,2N.
\end{aligned}\right.
\end{align}
Note that \eqref{Matr} implies only the second (first) column of $M^{+}$ ($M^{-}$) have a simple pole at $z=\xi_{n}$ ($z=\hat{\xi}_{n}$). Thus we get the residue conditions
\begin{align}
\frac{\mathop{Res}\limits_{z=\hat{\xi}_{n}}M^{+}(z)}{z-\hat{\xi}_{n}}+
\frac{\mathop{Res}\limits_{z=\xi_{n}}M^{-}(z)}{z-\xi_{n}}=\left(
\frac{C_{-}[\xi_{n}]e^{2i\theta(\xi_{n})}}{z-\xi_{n}}
u_{-,2}(\xi_{n}),
\frac{C_{+}[\hat{\xi}_{n}]e^{-2i\theta(\hat{\xi}_{n})}}{z-\hat{\xi}_{n}}
u_{-,1}(\hat{\xi}_{n})\right),
\end{align}
furthermore
\begin{align}\label{Q45}
\begin{split}
u_{-,2}(x,t;\xi_{s})=\left(\begin{array}{cc}
                       -iq_{-}/\xi_{s} \\
                        1
                     \end{array}\right)&+\sum_{n=1}^{2N}
\frac{C_{+}[\hat{\xi}_{n}]e^{-2i\theta(x,t;\hat{\xi}_{n})}}{\xi_{n}-\hat{\xi}_{n}}
u_{-,1}(x,t;\hat{\xi}_{n})\\
&+\frac{1}{2\pi i}\int_{\Sigma}\frac{(M^{-}G)_{2}(x,t;\zeta)}{\zeta-\xi_{s}}\,d\zeta,
\quad s=1, 2, \cdots, 2N.
\end{split}
\end{align}
The expression \eqref{Q25} implies
\begin{align}\label{Q46}
u_{-,2}(\xi_{s})=-\frac{iq_{-}}{\hat{\xi}_{s}}u_{-,1}(\hat{\xi}_{n}),\quad s=1, 2, \cdots, 2N,
\end{align}
then substituting it into \eqref{Q45}, one has
\begin{align}\label{Q47}
\sum_{n=1}^{2N}\left(\frac{C_{+}[\hat{\xi}_{n}]e^{-2i\theta(\hat{\xi}_{n})}}
{\xi_{s}-\hat{\xi}_{n}}+\frac{iq_{-}}{\xi_{s}}\delta_{sn}\right)
u_{-,1}(\hat{\xi}_{n})+\left(\begin{array}{cc}
                      -\frac{iq_{-}}{\xi_{s}}  \\
                        1
                     \end{array}\right)+
  \frac{1}{2\pi i}\int_{\Sigma}\frac{(M^{-}G)_{2}(\xi)}{\xi-\xi_{s}}\,d\xi=0,
\end{align}
with the function $\delta_{sn}$, the Kronecker delta function, means that $\delta_{sn}=\left\{\begin{aligned}
1, \quad s=n, \\
0, \quad s\neq n.
\end{aligned}
\right.
$.
Our ultimate goal is to express the solution $M(z)$ of RH problem according to the scattering data, so we need to know the expression of residue condition \eqref{Q44} , and then solve the expression of $u_{-,1}(\hat{\xi}_{n})$ and $u_{-,2}(\xi_{n})$ equally. From system \eqref{Q47}, we can get $2N$ unknowns of $u_{-,1}(\hat{\xi}_{n})$, and then the $2N$ expressions of $u_{-,2}(\xi_{n})$ can be derived by taking them into the equation \eqref{Q46}. Finally, according to equation \eqref{Q43}, one has the formal expression of $M(z)$ for the RH problem.

Next, we will express the potential with single pole  of the equation KE  under the condition of non-zero boundary value according to the above results.
\begin{prop}
The formal of potential $q$ for the KE can be written as
\begin{align}\label{Q48}
q(x,t)=-q_{-}-i\sum_{n=1}^{2N}C_{+}[\hat{\xi}_{n}]e^{-2i\theta(\hat{\xi}_{n})}
u_{-,11}(x,t;\hat{\xi}_{n})+\frac{1}{2\pi}\int_{\Sigma}(M^{-}G)_{12}(x,t;\xi)\,d\xi,
\end{align}
with $\xi_{n}$ is defined by \eqref{Q40}, and $u_{-,11}(x,t;\hat{\xi}_{n})$ depends on
\begin{align}\label{u11}
\sum_{n=1}^{2N}\left(\frac{C_{+}[\hat{\xi}_{n}]e^{-2i\theta(\hat{\xi}_{n})}}
{\xi_{s}-\hat{\xi}_{n}}+\frac{iq_{-}}{\xi_{s}}\delta_{sn}\right)
u_{-,11}(x,t;\hat{\xi}_{n})- \frac{iq_{-}}{\xi_{s}} +
  \frac{1}{2\pi i}\int_{\Sigma}\frac{(M^{-}G)_{1,2}(x,t;\xi)}{\xi-\xi_{s}}\,d\xi=0.
  \end{align}
\end{prop}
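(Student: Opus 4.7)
The plan is to read the reconstruction formula for $q(x,t)$ off the large-$z$ asymptotic expansion of the piecewise meromorphic matrix $M(x,t;z)$ given in \eqref{Q43}, and then to derive the closed system for the residue data $u_{-,11}(x,t;\hat\xi_n)$ by invoking the pole-pair symmetry \eqref{Q46}.

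First, I would substitute the ansatz $M(x,t;z)=\mathbb{I}+z^{-1}M^{(1)}(x,t)+O(z^{-2})$ into the normalized $x$-equation satisfied by $M$, obtained by conjugating the $\psi$-equation in \eqref{M7} via the transformation \eqref{M16}. Matching the coefficient of $z^{0}$ produces the commutator identity $(i/2)[\sigma_{3},M^{(1)}]=\tilde Q$; reading off its $(1,2)$ entry then yields a reconstruction identity expressing $q(x,t)$ in terms of $M^{(1)}_{12}(x,t)=\lim_{z\to\infty}z\bigl[M(x,t;z)-\mathbb{I}\bigr]_{12}$, the sign being fixed by the conventions of \eqref{M7} and \eqref{M16}. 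I would then extract $M^{(1)}_{12}$ directly from \eqref{Q43}, term by term: (a) the explicit meromorphic piece $-(i/z)\sigma_{3}Q_{-}$ contributes $-iq_{-}$ via its $(1,2)$-entry $q_{-}$; (b) since $\mathop{Res}\limits_{z=\hat\xi_{n}}M^{+}(z)=\bigl(0,\,C_{+}[\hat\xi_{n}]e^{-2i\theta(\hat\xi_{n})}u_{-,1}(x,t;\hat\xi_{n})\bigr)$ is supported in its second column, the $(1,2)$-entry of the residue sum contributes $\sum_{n=1}^{2N}C_{+}[\hat\xi_{n}]e^{-2i\theta(\hat\xi_{n})}u_{-,11}(x,t;\hat\xi_{n})$; (c) the residues $\mathop{Res}\limits_{z=\xi_{n}}M^{-}$ are supported in the first column, so they contribute nothing to the $(1,2)$-entry; (d) expanding $(\zeta-z)^{-1}=-z^{-1}(1+\zeta/z+\cdots)$ as $z\to\infty$, the Cauchy integral contributes $(i/(2\pi))\int_{\Sigma}(M^{-}G)_{12}(x,t;\zeta)\,d\zeta$. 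Assembling these four pieces through the reconstruction identity produces \eqref{Q48}.

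Second, for the closed algebraic system \eqref{u11}, I would take the second column of \eqref{Q43} and evaluate at $z=\xi_{s}$ to recover the vector identity \eqref{Q45} for $u_{-,2}(x,t;\xi_{s})$. Substituting the symmetry \eqref{Q46}, $u_{-,2}(\xi_{s})=-(iq_{-}/\hat\xi_{s})\,u_{-,1}(\hat\xi_{s})$, on the left-hand side of \eqref{Q45} converts it into the $2N$-vector identity \eqref{Q47}, and reading off the top component of that identity yields \eqref{u11} directly, producing a closed $2N\times 2N$ linear system for $\{u_{-,11}(x,t;\hat\xi_{n})\}_{n=1}^{2N}$ whose solution is to be inserted back into \eqref{Q48}.

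The main technical obstacle is the asymptotic bookkeeping in the first step: one must track both $\lambda(z)=(z+q_{0}^{2}/z)/2$ and $k(z)=(z-q_{0}^{2}/z)/2$ through the similarity transformation that turns $\psi$ into $M$, and verify that the $z\to\infty$ and $z\to 0$ normalizations listed in \eqref{jianjin} are mutually consistent with the meromorphic decomposition used in \eqref{Q43}, so that no hidden $O(1)$ term is lost when the residue and Cauchy contributions are extracted. Once the reconstruction identity between $q(x,t)$ and $M^{(1)}_{12}(x,t)$ is pinned down in the conventions of the paper, the rest of the argument is routine collection of terms.
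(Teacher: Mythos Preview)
Your proposal is correct and follows essentially the same route as the paper: substitute the large-$z$ expansion $M=\mathbb{I}+z^{-1}M^{(1)}+O(z^{-2})$ into the normalized $x$-equation (the paper records this as \eqref{Q50}), read off $q=-i(M^{(1)})_{12}$ from the $z^{0}$ coefficient, and then compute $M^{(1)}_{12}$ term by term from the integral representation \eqref{Q43}; the closed system \eqref{u11} is exactly the first component of \eqref{Q47}, which the paper had already obtained by evaluating the second column of \eqref{Q43} at $z=\xi_{s}$ and applying the symmetry \eqref{Q46}. Your remark about sign conventions is apt---the paper's \eqref{Q50} and the stated conclusion $q=-i(M^{(1)})_{12}$ fix the convention, and your commutator identity $(i/2)[\sigma_{3},M^{(1)}]=\tilde Q$ is the same computation up to that choice.
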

\begin{proof}
The asymptotic behavior of $M(z)$ satisfies
\begin{align}\label{Taylor}
M(x,t;z)=\mathbb{I}+\frac{1}{z}M^{(1)}(x,t;z)+O(1/z^{2}), \quad z\rightarrow\infty,
\end{align}
and
\begin{align}\label{Q49}
\begin{split}
M^{(1)}(z)=-i\sigma_{3}Q_{-}+\sum_{n=1}^{2N}&\left(
C_{-}[\xi_{n}]e^{2i\theta(\xi_{n})}u_{-,2}(\xi_{n}),
C_{+}[\hat{\xi}_{n}]e^{-2i\theta(\hat{\xi}_{n})}u_{-,1}(\hat{\xi}_{n})\right)\\
&-\frac{1}{2\pi i}\int_{\Sigma}(M^{-}G)_{1,2}(x,t;\xi)\,d\xi=0.
\end{split}
\end{align}
In additional, $M(z)e^{i\theta(z)\sigma_{3}}$ such that \eqref{M7}, thus we obtain
\begin{align}\label{Q50}
M_{x}(x,t;z)+M(x,t;z)\left(\frac{1}{2}i\sigma_{3}z+\frac{1}{2z}iq_{0}^{2}
\sigma_{3}\right)=\left(\frac{1}{2}i\sigma_{3}z-\frac{1}{2z}iq_{0}^{2}
\sigma_{3}+\tilde{Q}\right)M(x,t;z).
\end{align}
Plugging \eqref{Taylor} into \eqref{Q50} and collecting the coefficient of $z^{0}$, then we can get
\begin{align}\label{Q51}
q(x,t)=-i(M^{(1)})_{12}.
\end{align}
The Proposition can be derived by the equation \eqref{Q49}.
\end{proof}

For our convenience, let's make the following notions:
$T=(t_{sj})_{(2N)\times(2N)}$, $\omega=(\omega_{j})_{(2N)\times1}$, and $\nu=(\nu_{j})_{(2N)\times1}$ with $t_{sj}=\frac{\omega_{j}}{\xi_{s}-\hat{\xi}_{j}}+\nu_{s}\delta_{sj}$, $\omega_{j}=C_{+}[\hat{\xi}_{j}]e^{-2i\theta(\hat{\xi_{j}})}$, and $\nu_{j}=\frac{iq_{-}}{\xi_{j}}$, besides $x_{n}=u_{-,11}(x,t;\hat{\xi}_{n})$.
Now let's consider the case of non-reflection, that is  $G=0$. From \eqref{u11}, we know that $x=(u_{-,11}(x,t;\hat{\xi}_{n}))_{(2N)\times1}=T^{-1}\nu$, as well as the following Proposition:
\begin{prop}
The non-zero boundary value form solution of the equation \eqref{M7}  without scattering coefficients can be written as
\begin{align}\label{Jie-Q}
q(x,t)=-q_{-}+i\frac{\det\left(\begin{array}{cc}
                      T & \nu \\
                      \omega^{T} & 0
                    \end{array}\right)}{\det T}.
\end{align}
\end{prop}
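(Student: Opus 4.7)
The strategy is a one-line Schur complement manipulation applied to the output of the preceding Proposition in the reflectionless limit, so essentially all the analytic work has already been done and what remains is linear algebra. First, I would set $G \equiv 0$ in both \eqref{Q48} and \eqref{u11}. The contour integrals vanish identically, and \eqref{Q48} collapses to
\begin{equation*}
q(x,t) = -q_- - i\sum_{n=1}^{2N} \omega_n\, x_n = -q_- - i\,\omega^T x,
\end{equation*}
while \eqref{u11} becomes the purely algebraic system
\begin{equation*}
\sum_{n=1}^{2N} t_{sn}\, x_n = \nu_s,\qquad s=1,\dots,2N,
\end{equation*}
that is, $T x = \nu$. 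Under the natural nondegeneracy assumption $\det T \neq 0$ (generic in the admissible soliton sector), this yields the unique solution $x = T^{-1}\nu$, so that $\omega^T x = \omega^T T^{-1}\nu$.

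Second, I would apply the standard Schur complement determinant identity for a bordered matrix with an invertible top-left block:
\begin{equation*}
\det \begin{pmatrix} T & \nu \\ \omega^T & 0 \end{pmatrix} = \det T \cdot \bigl(0 - \omega^T T^{-1}\nu\bigr) = -\det T \cdot \omega^T T^{-1}\nu,
\end{equation*}
which rearranges to $\omega^T T^{-1}\nu = -\det(\cdot)/\det T$. One could equivalently expand along the last row and apply Cramer's rule $x_j = \det T^{(j)}/\det T$, where $T^{(j)}$ denotes $T$ with its $j$-th column replaced by $\nu$; the resulting sign $(-1)^{2N-j}$ from shifting the $\nu$-column to the far right cancels the sign $(-1)^{(2N+1)+j}$ from the cofactor expansion and produces the same identity.

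Third, I would substitute back into the expression for $q$ obtained in step one:
\begin{equation*}
q(x,t) = -q_- - i\,\omega^T x = -q_- - i\,\omega^T T^{-1}\nu = -q_- + i\,\frac{\det\begin{pmatrix} T & \nu \\ \omega^T & 0\end{pmatrix}}{\det T},
\end{equation*}
which is precisely \eqref{Jie-Q}.

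There is no genuine obstacle in this argument; it is a bookkeeping exercise converting the sum $\sum_n \omega_n x_n$ into a single quotient of determinants. The only subtle points are (i) tracking the sign conventions between the coefficients written in \eqref{u11} and the matrix entries $t_{sn}$, $\nu_s$, $\omega_n$ defined just before the Proposition, and (ii) making the nondegeneracy of $T$ explicit, since the Schur complement formula requires $\det T \neq 0$. Beyond these, the Proposition is essentially a compact determinantal packaging of Cramer's rule applied to the residue system \eqref{u11} in the reflectionless sector.
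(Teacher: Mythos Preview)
Your proof is correct and follows essentially the same approach as the paper: the paper simply records $x = T^{-1}\nu$ from \eqref{u11} with $G=0$ and states the determinantal formula, while you supply the explicit Schur complement step that converts $-i\,\omega^{T}T^{-1}\nu$ into $+i\det\bigl(\begin{smallmatrix}T & \nu\\ \omega^{T} & 0\end{smallmatrix}\bigr)/\det T$. Your sign tracking and the nondegeneracy remark on $\det T$ are accurate and in fact more detailed than what the paper provides.
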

\subsection{Soliton solutions}
In this section, the propagation behavior characteristics of the solution for the  focusing Kundu-Eckhaus equation will be presented in combination with the specific parameters. Generally, the expressions of these solutions are very complex. In order to be consistent, we will omit the expressions of these solutions, only give the propagation image and make a brief analysis.

\noindent \textbf {Case 1.}
With considering the discrete eigenvalue is purely imaginary, that is $z=iZ$ $(Z>1)$,
there is no phase different, further the solutions can be given with appropriate paeameters with taking $N=1$

{\rotatebox{0}{\includegraphics[width=2.75cm,height=2.5cm,angle=0]{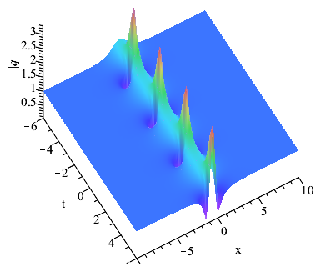}}}
{\rotatebox{0}{\includegraphics[width=2.75cm,height=2.5cm,angle=0]{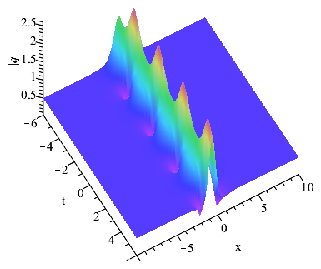}}}
{\rotatebox{0}{\includegraphics[width=2.75cm,height=2.5cm,angle=0]{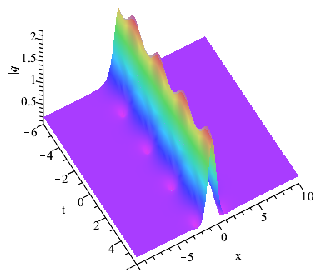}}}
{\rotatebox{0}{\includegraphics[width=2.75cm,height=2.5cm,angle=0]{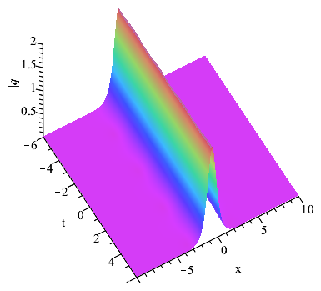}}}

 $\qquad\quad(\textbf{a1})\qquad \ \quad\quad\qquad(\textbf{b1})
\ \qquad\quad\quad\quad\qquad(\textbf{c1})\qquad\quad\quad\qquad(\textbf{d1})
\ \qquad$\\

{\rotatebox{0}{\includegraphics[width=2.75cm,height=2.15cm,angle=0]{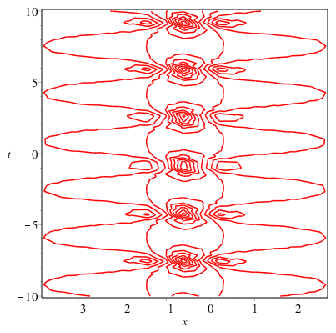}}}
{\rotatebox{0}{\includegraphics[width=2.75cm,height=2.15cm,angle=0]{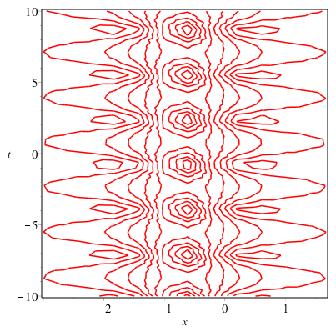}}}
{\rotatebox{0}{\includegraphics[width=2.75cm,height=2.15cm,angle=0]{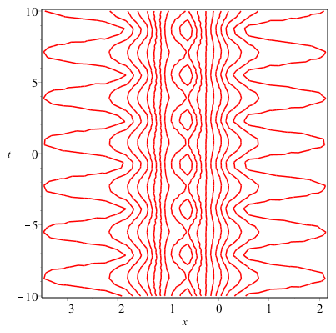}}}
{\rotatebox{0}{\includegraphics[width=2.75cm,height=2.15cm,angle=0]{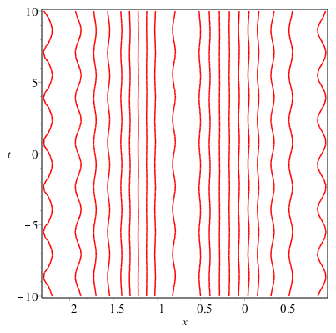}}}

 $\quad\qquad(\textbf{a2})\qquad \ \quad\qquad\qquad(\textbf{b2})
\ \qquad\quad\quad\quad\quad(\textbf{c2})\quad\quad\qquad\qquad(\textbf{d2})
\ \qquad$\\
\noindent { \small \textbf{Figure 3.} The  breather wave solutions  with the fixed parameters $N=1$, $z_{1}=2i$, $C_{-}[z_{1}]=1$. $\textbf{(a1)}$: the breather wave solution with the  $q_{-}=1$; $\textbf{(b1)}$: the breather wave solution with the  $q_{-}=0.5$; $\textbf{(c1)}$: the breather wave solution with the $q_{-}=0.2$; $\textbf{(d1)}$: the bright soliton solution with the  $q_{-}=0.02$; $\textbf{(a2)-(d2)}$: correspond to the contour respectively.}

For $z_{1}=2i$, the asymptotic phase is $2\pi$ from the theta condition \eqref{F5}, which implies that there is no phase difference in this case. From Fig. 3(a1), it can be seen directly that the solution is homoclinic in $x$-axis and periodic in $t$-axis, which is called Kuznetsov-Ma breather solution proposed by Kuznetsov and Ma \cite{Kuznetsov-1977,Ma-1979}. With the decrease of the boundary condition  $q_{-}$, the periodic behavior of the solution gradually moves up until the boundary value tends to zero, then the solution tends to bright soliton solutions shown in Fig. 3(d1).

\noindent \textbf {Case 2.} When the eigenvalue is complex parameters, we give the propagation behavior of the solution \eqref{Jie-Q}, which is compared with Figure 3 and analyzed briefly. It is worth noting that the solution is no longer homoclinic  in $x$-axis and period in $t$-axis.

{\rotatebox{0}{\includegraphics[width=2.85cm,height=2.8cm,angle=0]{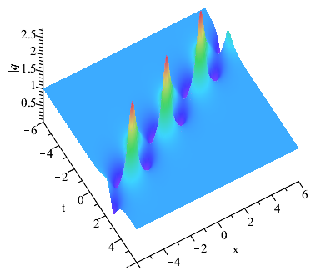}}}
{\rotatebox{0}{\includegraphics[width=2.85cm,height=2.15cm,angle=0]{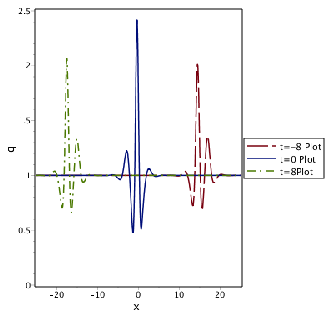}}}
{\rotatebox{0}{\includegraphics[width=2.85cm,height=2.8cm,angle=0]{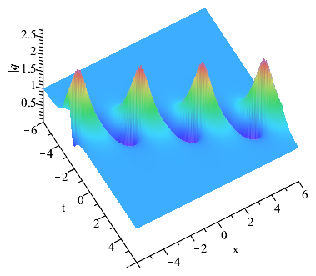}}}
{\rotatebox{0}{\includegraphics[width=2.85cm,height=2.15cm,angle=0]{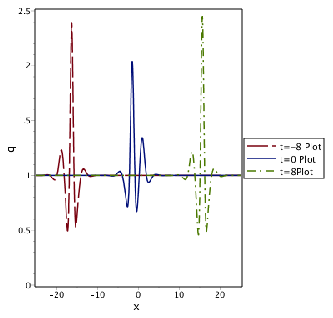}}}

 $\quad\qquad(\textbf{a})\quad \ \qquad\qquad\qquad\quad(\textbf{b})
\ \qquad\quad\quad\quad\qquad\quad(\textbf{c})\qquad\qquad\qquad\qquad(\textbf{d})
\ \qquad$\\
\noindent { \small \textbf{Figure 4.} The breather wave soliton solutions  with the fixed parameters $N=1$,  $C_{-}[z_{1}]=1$ and $q_{-}=1$. $\textbf{(a)}$: the breather wave solution with the $z_{1}=2e^{i\pi/4}$; $\textbf{(b)}$: exhibits the propagation path of solutions at different times; Fig. $\textbf{(c)}$: the breather wave solution with the $z_{1}=2e^{3i\pi/4}$;  $\textbf{(d)}$: denotes the propagation path of solutions at different times.}

The phenomenon, the behavior of the solution is neither parallel to the $x$-axis nor to the $t$-axis, can be seen with comparing the Fig. 4 with Fig. $3(a1)$, which is called non-stationary solutions. The reason is that for the discrete spectral point $z_{1}$ in this case, there exits the phase different. Note that the auxiliary angle a is the critical point, which represents Kuznetsov-Ma breather solution. When the auxiliary angle is less than or greater than the critical point, this will change the propagation behavior of the solution.  In additional, from the propagation path of Fig. $4(b,c)$, we can see that the behavior of solution is symmetrical about time $t=0$.

{\rotatebox{0}{\includegraphics[width=2.85cm,height=2.8cm,angle=0]{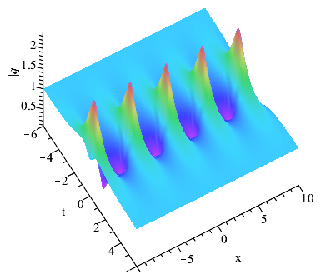}}}
{\rotatebox{0}{\includegraphics[width=2.35cm,height=2.15cm,angle=0]{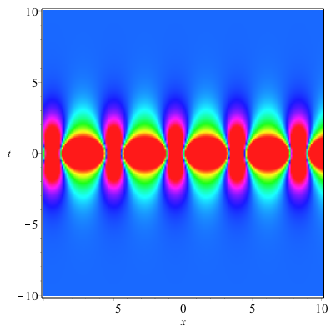}}}
{\rotatebox{0}{\includegraphics[width=2.85cm,height=2.8cm,angle=0]{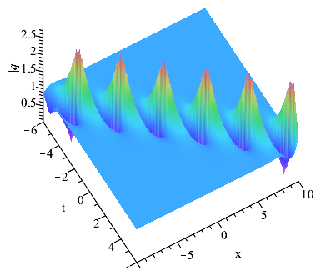}}}
{\rotatebox{0}{\includegraphics[width=2.35cm,height=2.15cm,angle=0]{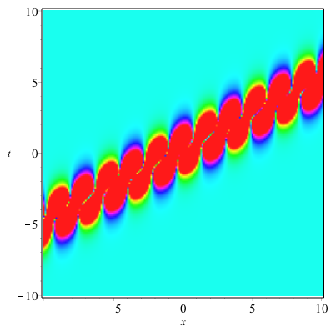}}}

 $\qquad\quad(\textbf{a})\quad \ \qquad\qquad\qquad\quad(\textbf{b})
\ \qquad\quad\quad\quad\qquad(\textbf{c})\qquad\qquad\qquad\qquad(\textbf{d})
\ \qquad$\\
\noindent { \small \textbf{Figure 5.} The breather soliton solutions  with the fixed parameters $N=1$,  $C_{+}[z_{1}]=1$. Fig. $\textbf{(a)}$: the space-breather solution with the $q_{-}=1$, $z_{1}=e^{i\pi/4}$ and $\delta_{2}=0$; $\textbf{(b)}$: presents the density of the space-breather solution; Fig. $\textbf{(c)}$: the breather solution with the $q_{-}=1$, $z_{1}=\frac{1}{2}e^{i\pi/4}$ and $\delta_{2}=1$; $\textbf{(d)}$: denotes the density of the breather solution.}

Obviously, from the observation of Fig. $5(a)$, it is not hard find that the solution changes periodically on the $x$-axis, and is homoclinic on the $t$-axis, which is just the opposite of Fig. $3(a1)$. The graph of its density also shows the behavior of the solution more clearly. In Fig. $5(c)$, the solution under the condition of the parameter at this time is also called the non-stationary solution.

In what follows, taking $N=2$ the different kinds solutions are listed. By selecting different discrete eigenvalues, the influence of them on solution propagation is observed.

{\rotatebox{0}{\includegraphics[width=3.6cm,height=3.0cm,angle=0]{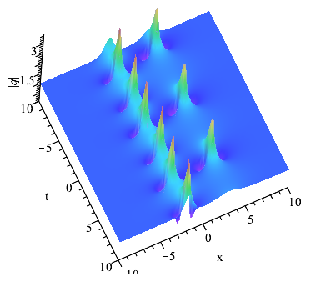}}}
{\rotatebox{0}{\includegraphics[width=3.3cm,height=2.55cm,angle=0]{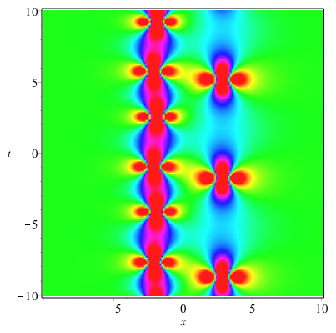}}}
\quad
{\rotatebox{0}{\includegraphics[width=3.6cm,height=2.75cm,angle=0]{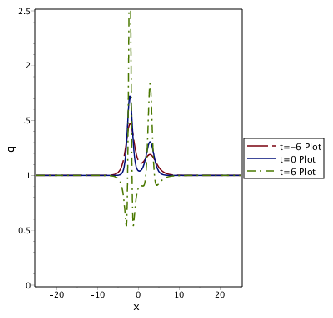}}}

$\qquad\qquad\quad(\textbf{a1})\quad \ \quad\qquad\qquad\qquad\qquad(\textbf{a2})
\qquad\qquad\qquad\quad\qquad(\textbf{a3})$\\

{\rotatebox{0}{\includegraphics[width=3.6cm,height=3.0cm,angle=0]{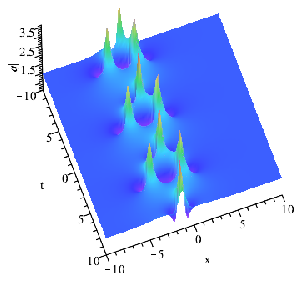}}}
{\rotatebox{0}{\includegraphics[width=3.3cm,height=2.55cm,angle=0]{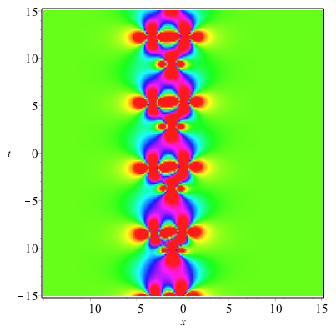}}}
\quad
{\rotatebox{0}{\includegraphics[width=3.6cm,height=2.75cm,angle=0]{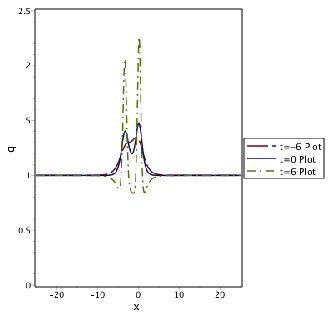}}}

$\qquad\qquad\quad(\textbf{b1})\quad \ \quad\qquad\qquad\qquad\qquad(\textbf{b2})
\qquad\qquad\qquad\quad\qquad(\textbf{b3})$\\

{\rotatebox{0}{\includegraphics[width=3.6cm,height=3.0cm,angle=0]{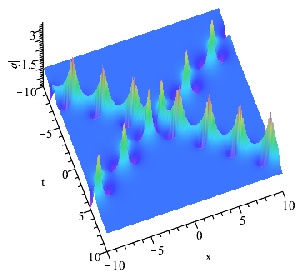}}}
{\rotatebox{0}{\includegraphics[width=3.3cm,height=2.55cm,angle=0]{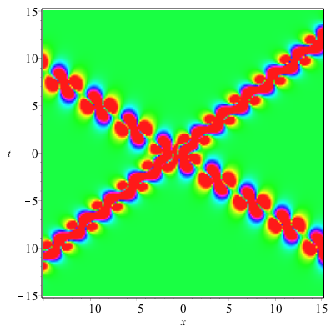}}}
\quad
{\rotatebox{0}{\includegraphics[width=3.6cm,height=2.75cm,angle=0]{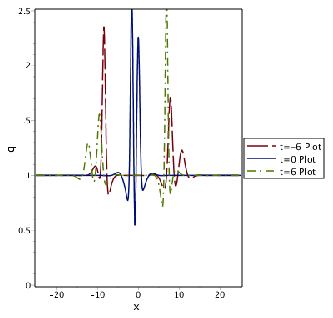}}}

$\qquad\qquad\quad(\textbf{c1})\quad \ \quad\qquad\qquad\qquad\qquad(\textbf{c2})
\qquad\qquad\qquad\quad\qquad(\textbf{c3})$\\

\noindent { \small \textbf{Figure 6.} The breather-type solution with the parameters $C_{-}[z_{1}]=C_{-}[z_{2}]=1$, and $q_{-}=1$.
$\textbf{(a1, b1, c1)}$ show the solution with $z_{1}=\frac{1}{2}i$, $z_{2}=\frac{3}{2}i$;  $z_{1}=2i$, $z_{2}=\frac{3}{2}i$; $z_{1}=-1+2i$, $z_{2}=1+\frac{3}{2}i$.
$\textbf{(a2, b2, c2)}$ present density of different kinds solutions;
$\textbf{(a3,b3,c3)}$ display the propagation behavior of solutions at different times. the breather-breather solution with $t=-8, t=0, t=8$.}
\section{The focusing KE equation with NZBCs: double poles}
In the previous work, we have discussed the case that the scattering coefficients are of single poles in the discrete spectrum points. On this basis, we further study the case that the discrete spectrum points are of double poles, that is, if $z_{n}$ is of double poles for the scattering coefficient $s_{11}(z)$, then $s_{11}(z_{n})=s'_{11}(z_{n})=0$ along with $s''_{11}(z_{n})\neq0$ (Note that and in what follows $'$ denotes the derivative with respect to $z$). These two cases have similar treatment methods, but there are also differences, such as residue conditions, trace formula, etc. It is worth noting that in the following residue calculation, we have to calculate the coefficient of the term of order $-2$.

Now we assume that the scattering coefficient $s_{11}(z)$ and $s_{22}(z)$ are of double poles in the discrete spectrum points $\mathbb{Z}$, which means
\begin{align}\label{T1}
\left\{\begin{aligned}
s_{11}(z_{n})=s'_{11}(z_{n})=0, s''_{11}(z_{n})\neq0,\quad
z_{n}\in\mathbb{Z}\cap D^{-},\\
s_{22}(z_{n})=s'_{22}(z_{n})=0, s''_{22}(z_{n})\neq0,\quad
z_{n}\in\mathbb{Z}\cap D^{+}.
\end{aligned} \right.
\end{align}
In order to calculate the residue, we give the following proposition in similar \cite{Pichler-2017}
\begin{prop}
If the function $f$ and $g$ are analytic in a  complex  region $\Omega\in\mathbb{C}$ , and $g$ is of a double zero $z_{n}\in\Omega$ and $f(z_{n})\neq0$, then the residue condition of $f/g$ can be derived by the Laurent expansion at $z=z_{n}$, namely
\begin{align}\label{T2}
\mathop{Res}_{z=z_{n}}\left[\frac{f}{g}\right]=\frac{2f'(z_{n})}{g''(z_{n})}-
\frac{2f(z_{n})g'''(z_{n})}{3(g''(z_{n}))^{2}},\quad
\mathop{P_{-2}}_{z=z_{n}}\left[\frac{f}{g}\right]=\frac{2f(z_{n})}{g''(z_{n})}.
\end{align}
\end{prop}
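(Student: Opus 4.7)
The plan is to expand both $f$ and $g$ in Taylor series about $z=z_n$ and compute the Laurent series of $f/g$ directly by inverting the Taylor series of $g$ via geometric series. Because we only need the coefficients of $(z-z_n)^{-2}$ and $(z-z_n)^{-1}$, truncating each expansion to low order suffices, so no sharp estimates are required.

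Concretely, I would first write
\begin{align*}
f(z)&=f(z_n)+f'(z_n)(z-z_n)+O((z-z_n)^{2}),\\
g(z)&=\tfrac{1}{2}g''(z_n)(z-z_n)^{2}+\tfrac{1}{6}g'''(z_n)(z-z_n)^{3}+O((z-z_n)^{4}),
\end{align*}
using analyticity of $f,g$ on $\Omega$ and the double-zero hypothesis $g(z_n)=g'(z_n)=0$, $g''(z_n)\neq 0$. Factoring $(z-z_n)^{2}$ out of $g$ gives
\begin{align*}
g(z)=\tfrac{1}{2}g''(z_n)(z-z_n)^{2}\Bigl[1+\tfrac{g'''(z_n)}{3g''(z_n)}(z-z_n)+O((z-z_n)^{2})\Bigr].
\end{align*}

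Next I would invert the bracket via the identity $1/(1+w)=1-w+O(w^{2})$, valid for small $w$, to obtain
\begin{align*}
\frac{1}{g(z)}=\frac{2}{g''(z_n)(z-z_n)^{2}}\Bigl[1-\tfrac{g'''(z_n)}{3g''(z_n)}(z-z_n)+O((z-z_n)^{2})\Bigr].
\end{align*}
Multiplying by the Taylor expansion of $f$ and collecting powers of $(z-z_n)$, the coefficient of $(z-z_n)^{-2}$ is $2f(z_n)/g''(z_n)$, which gives the claimed formula for $P_{-2}$, and the coefficient of $(z-z_n)^{-1}$ is
\begin{align*}
\frac{2f'(z_n)}{g''(z_n)}-\frac{2f(z_n)g'''(z_n)}{3(g''(z_n))^{2}},
\end{align*}
which is exactly $\mathrm{Res}_{z=z_n}[f/g]$.

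There is really no hard step here; the whole argument is a bookkeeping exercise in Laurent expansion. The only place where care is needed is the sign and the factor of $1/3$ coming from the first-order term of the geometric inversion, since getting that wrong would corrupt the residue formula (though not $P_{-2}$). I would double-check by differentiating the identity $f(z)=(f/g)(z)\cdot g(z)$ twice at $z_n$ and comparing to the Laurent coefficients, which provides an independent verification of both formulae.
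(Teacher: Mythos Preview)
Your argument is correct and is precisely the standard derivation via Taylor expansion and geometric-series inversion; the paper itself does not supply a proof of this proposition but simply states it with a reference to Pichler--Biondini, so your write-up fills in exactly the expected computation.
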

Recall the symmetries \eqref{Q27} and \eqref{Q28}, following that if $s_{11}(z_{n})=0$, then
\begin{align}\label{T3}
s_{22}(z_{n}^{*})=s_{22}(-q_{0}^{2}/z_{n})=s_{11}(-q_{0}^{2}/z_{n}^{*})=0.
\end{align}
Similarly, if $s_{11}(z)$ and $s_{22}(z)$ satisfy the condition \eqref{T1}, the equation \eqref{Q30} and \eqref{Q32} still holds (Note that the previous notion $b_{-}(z_{n})$ and $b_{+}(z_{n}^{*})$ still used for convenience, but it may be different from the previous one), then the Proposition can be derived:
\begin{prop}
 Two pairs of linear relations:
\addtocounter{equation}{1}
\begin{align}\label{T4}
\psi_{+,1}(x,t;z_{n})&=b_{-}(z_{n})\psi_{-,2}(x,t;z_{n}),\tag{\theequation a}\\
\psi'_{+,1}(x,t;z_{n})&=d_{-}(z_{n})\psi_{-,2}(x,t;z_{n})+
b_{-}(z_{n})\psi'_{-,2}(x,t;z_{n}),\tag{\theequation b}\\ \label{t1}
\psi_{+,2}(x,t;z_{n}^{*})&=b_{+}(z_{n}^{*})\psi_{-,1}(x,t;z_{n}^{*}),
\tag{\theequation c}\\
\psi'_{+,2}(x,t;z_{n}^{*})&=d_{+}(z_{n}^{*})\psi_{-,1}(x,t;z_{n}^{*})+
b_{+}(z_{n}^{*})\psi'_{-,1}(x,t;z_{n}^{*}).\tag{\theequation d} \label{t3}
\end{align}
\end{prop}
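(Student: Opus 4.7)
The plan is to differentiate the scattering relation on a column-by-column basis and exploit the hypothesis \eqref{T1} that the relevant scattering coefficients have double zeros, picking out the nonvanishing pieces at each step.

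First I would recall the column form of $\psi_{+} = \psi_{-} S(z)$ from \eqref{M20}, which reads
\begin{align*}
\psi_{+,1}(x,t;z) &= s_{11}(z)\,\psi_{-,1}(x,t;z) + s_{21}(z)\,\psi_{-,2}(x,t;z),\\
\psi_{+,2}(x,t;z) &= s_{12}(z)\,\psi_{-,1}(x,t;z) + s_{22}(z)\,\psi_{-,2}(x,t;z).
\end{align*}
By Corollary \ref{c1} each column $\psi_{\pm,j}$ extends analytically (in $z$) to a neighborhood of the discrete spectrum, so differentiation of these relations with respect to $z$ is legitimate.

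Evaluating the first relation at $z=z_{n}$ and invoking $s_{11}(z_{n})=0$ annihilates the $\psi_{-,1}$-term, yielding \eqref{T4}(a) with $b_{-}(z_{n}):=s_{21}(z_{n})$. Next, differentiating the same relation in $z$ gives
\begin{align*}
\psi'_{+,1}(z) = s'_{11}(z)\,\psi_{-,1}(z) + s_{11}(z)\,\psi'_{-,1}(z) + s'_{21}(z)\,\psi_{-,2}(z) + s_{21}(z)\,\psi'_{-,2}(z),
\end{align*}
and the double-zero assumption $s_{11}(z_{n})=s'_{11}(z_{n})=0$ kills both terms containing $\psi_{-,1}$ or $\psi'_{-,1}$. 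Setting $d_{-}(z_{n}):=s'_{21}(z_{n})$ produces exactly \eqref{T4}(b). The last two identities \eqref{t1}--\eqref{t3} follow by the identical procedure applied to the second column at $z=z_{n}^{*}$, where now $s_{22}(z_{n}^{*})=s'_{22}(z_{n}^{*})=0$, with $b_{+}(z_{n}^{*}):=s_{12}(z_{n}^{*})$ and $d_{+}(z_{n}^{*}):=s'_{12}(z_{n}^{*})$.

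There is no substantive obstacle here; the content is purely the observation that at a \emph{double} zero, differentiating once preserves the vanishing of the diagonal scattering coefficient, so the proportionality between the columns of $\psi_{+}$ and $\psi_{-}$ lifts to a proportionality between their first-order jets. The only thing to be careful about is notation: the constants $b_{\pm}$ here are redefined relative to the simple-pole section (as the authors warn), and $d_{\pm}$ are new quantities capturing the first-derivative data of the off-diagonal scattering coefficients at the double zero.
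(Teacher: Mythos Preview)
Your argument has a genuine gap in the analyticity justification. The scattering relation \eqref{M20} holds only for $z\in\Sigma_{0}$, and Proposition~2.5 explicitly states that the off-diagonal entries $s_{12},s_{21}$ are \emph{not analytic} but merely continuous on $\Sigma_{0}$; likewise Corollary~\ref{c1} gives $\psi_{-,1}$ analytic in $D^{+}$, not in $D^{-}$. Since the discrete eigenvalue $z_{n}$ lies in $D^{-}\setminus\Sigma$, neither $s_{21}(z_{n})$ nor $\psi_{-,1}(x,t;z_{n})$ is defined under the paper's $L^{1}$ hypothesis, so you cannot evaluate or differentiate the relation $\psi_{+,1}=s_{11}\psi_{-,1}+s_{21}\psi_{-,2}$ at $z=z_{n}$. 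Your appeal to Corollary~\ref{c1} misreads it: that corollary assigns each column to exactly one of $D^{\pm}$, not to a full neighborhood of every spectral point. The identifications $b_{-}(z_{n})=s_{21}(z_{n})$ and $d_{-}(z_{n})=s'_{21}(z_{n})$ are therefore formal only; they become rigorous under stronger decay (e.g.\ exponential), but not here.

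The paper avoids this by working entirely with quantities analytic in $D^{-}$: it differentiates the Wronskian identity $\gamma(z)s_{11}(z)=Wr(\psi_{+,1}(z),\psi_{-,2}(z))$ from \eqref{Wr-1}, which involves only $\psi_{+,1}$ and $\psi_{-,2}$ (both analytic in $D^{-}$). Evaluating at $z_{n}$ and using $s_{11}(z_{n})=s'_{11}(z_{n})=0$ together with part~(a) yields
\[
Wr\bigl[\psi'_{+,1}(z_{n})-b_{-}(z_{n})\psi'_{-,2}(z_{n}),\ \psi_{-,2}(z_{n})\bigr]=0,
\]
whence the existence of a constant $d_{-}(z_{n})$ giving (b). This route never leaves the domain of analyticity and is the correct one at the stated level of generality.
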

\begin{proof}
By deriving the equation \eqref{Wr-1} about $z$, we can get
\begin{align}\label{T5}
[s_{11}(z)/\gamma]=Wr[\psi'_{+,1}(z),\psi_{-,2}(z)]+
Wr[\psi_{+,1}(z),\psi'_{-,2}(z)],
\end{align}
and $s_{11}(z)$ such that \eqref{T1}, thus
\begin{align}\label{T6}
Wr[\psi'_{+,1}(z_{n})-b_{-}(z_{n})\psi'_{-,2}(z_{n}),\psi_{-,2}(z_{n})]=0,
\end{align}
which means that there exist a constant $d_{-}(z_{n})$ satisfying
\begin{align*}
\psi'_{+,1}(x,t;z_{n})&=d_{-}(z_{n})\psi_{-,2}(x,t;z_{n})+
b_{-}(z_{n})\psi'_{-,2}(x,t;z_{n}).
\end{align*}
Another  case can be proved similarly, so the proof of Proposition is completed.
\end{proof}
Due to $z_{n}$ is a double zero of $s_{11}(z)$, from \eqref{T2} the residue conditions can be derived
\addtocounter{equation}{1}
\begin{align}\label{Res1}
\mathop{P_{-2}}_{z=z_{n}}\left[\frac{\psi_{+,1}(x,t;z)}{s_{11}(z)}\right]&=
\frac{2\psi_{+,1}(z_{n})}{s''_{11}(z_{n})}=\frac{2b_{-}(z_{n})}{s''_{11}(z_{n})}
\psi_{-,2}(z_{z_{n}})=A_{-}[z_{n}]\psi_{-,2}(z_{n}), \tag{\theequation a}\\
\mathop{Res}_{z=z_{n}}\left[\frac{\psi_{+,1}(x,t;z)}{s_{11}(z)}\right]
&=\frac{2b_{-}(z_{n})}{s''_{11}(z_{n})}\left[\psi'_{-,2}(z_{n})+
\left(\frac{d_{-}(z_{n})}{b_{-}(z_{n})}-\frac{s'''_{11}(z_{n})}{3s''_{11}(z_{n})}
\right)\psi_{-,2}(z_{n})\right].\tag{\theequation b} \label{t2}
\end{align}
Similarly, for $z_{n}^{*}$ is a double zero of $s_{22}(z)$, one has
\addtocounter{equation}{1}
\begin{align}\label{Res2}
\mathop{P_{-2}}_{z=z_{n}^{*}}\left[\frac{\psi_{+,2}(x,t;z)}{s_{22}(z)}\right]&=
\frac{2\psi_{+,2}(z_{n}^{*})}{s''_{22}(z_{n}^{*})}=
\frac{2b_{+}(z_{n}^{*})}{s''_{22}(z_{n}^{*})}\psi_{-,1}(z_{z_{n}}^{*})
=A_{+}[z_{n}^{*}]\psi_{-,1}(z_{n}^{*}), \tag{\theequation a}\\
\mathop{Res}_{z=z_{n}^{*}}\left[\frac{\psi_{+,2}(x,t;z)}{s_{22}(z)}\right]
&=\frac{2b_{+}(z_{n}^{*})}{s''_{22}(z_{n}^{*})}\left[\psi'_{-,1}(z_{n}^{*})+
\left(\frac{d_{+}(z_{n}^{*})}{b_{+}(z_{n}^{*})}-
\frac{s'''_{22}(z_{n}^{*})}{3s''_{22}(z_{n}^{*})}
\right)\psi_{-,1}(z_{n}^{*})\right].\tag{\theequation b} \label{t4}
\end{align}
Denoting
\begin{align}\label{T7}
A_{-}[z_{n}]=\frac{2b_{-}(z_{n})}{s''_{11}(z_{n})},\quad
B_{-}[z_{n}]=\frac{d_{-}(z_{n})}{b_{-}(z_{n})}-
\frac{s'''_{11}(z_{n})}{3s''_{11}(z_{n})},\\
A_{+}[z_{n}^{*}]=\frac{2b_{+}(z_{n}^{*})}{s''_{22}(z_{n}^{*})},\quad
B_{+}[z_{n}^{*}]=\frac{d_{+}(z_{n}^{*})}{b_{+}(z_{n}^{*})}-
\frac{s'''_{22}(z_{n}^{*})}{3s''_{22}(z_{n}^{*})}.
\end{align}
the expressions \eqref{t2} and \eqref{t4} can be written as
\begin{align}\label{T8}
\mathop{Res}_{z=z_{n}}\left[\frac{\psi_{+,1}(x,t;z)}{s_{11}(z)}\right]=
A_{-}[z_{n}]\left[\psi'_{-,2}(x,t;z_{n})+B_{-}[z_{n}]\psi_{-,2}(x,t;z_{n})\right],\\
\mathop{Res}_{z=z_{n}^{*}}\left[\frac{\psi_{+,2}(x,t;z)}{s_{22}(z)}\right]=
A_{+}[z_{n}^{*}]\left[\psi'_{-,1}(x,t;z_{n}^{*})+
B_{+}[z_{n}^{*}]\psi_{-,1}(x,t;z_{n}^{*})\right].
\end{align}
Next, we will find the relationship between the data $A_{\pm}$ and $B_{\pm}$ according to the symmetry of scattering data and Jost function, then
\begin{prop}
\begin{align}
\left\{\begin{aligned}
A_{-}[z_{n}]&=-A_{+}^{*}[z_{n}^{*}]=\frac{z_{n}^{4}q_{-}^{*}}
{q_{0}^{4}q_{-}}A_{+}\left(-\frac{q_{0}^{2}}{z_{n}}\right)=-\frac{z_{n}^{4}q_{-}^{*}}
{q_{0}^{4}q_{-}}A_{-}\left(-\frac{q_{0}^{2}}{z_{n}^{*}}\right),
\quad\quad z_{n}\in \mathbb{Z}\cap D^{-},\\
B_{-}[z_{n}]&=B_{+}^{*}[z_{n}^{*}]=\frac{q_{0}^{2}}
{z_{n}^{2}}B_{+}\left(-\frac{q_{0}^{2}}{z_{n}}\right)+\frac{2}{z_{n}}=\frac{q_{0}^{2}}
{z_{n}^{2}}B_{-}^{*}\left(-\frac{q_{0}^{2}}{z_{n}^{*}}\right)+\frac{2}{z_{n}}.
\quad z_{n}\in \mathbb{Z}\cap D^{-}.
\end{aligned} \right.
\end{align}
\end{prop}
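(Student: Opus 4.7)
The plan is to derive the two chains of equalities by exploiting, in turn, the two involutions already established at the Jost-function and scattering-matrix levels: the Schwarz reflection $z \mapsto z^*$ (which will give the first and third equalities in each chain) and the Joukowsky involution $z \mapsto -q_0^2/z$ (which will give the second). The fourth equality in each chain is obtained by composing these two, so only two independent computations are needed.

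For the conjugation symmetry, I would start from $\psi_{\pm,1}(z)=\sigma_0\overline{\psi_{\pm,2}(\bar z)}$ and $\psi_{-,2}(z)=-\sigma_0\overline{\psi_{-,1}(\bar z)}$ in (Q24), differentiate once in $z$ (Schwarz reflection is analytic, so the derivative commutes through the bar), and substitute into the four double-pole linear relations (T4a)--(T4d). Matching coefficients at $z=z_n$ yields $b_+(z_n^*)=-\overline{b_-(z_n)}$ and the new relation $d_+(z_n^*)=-\overline{d_-(z_n)}$. Combining with $s_{11}^{(k)}(z_n)=\overline{s_{22}^{(k)}(z_n^*)}$ for $k=2,3$ (obtained by differentiating $s_{11}(z)=\overline{s_{22}(\bar z)}$) and plugging into the defining ratios \eqref{T7} immediately gives $A_-[z_n]=-A_+^*[z_n^*]$ and $B_-[z_n]=B_+^*[z_n^*]$. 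For the Joukowsky symmetry, I would differentiate the relation $\psi_{+,1}(z)=-(iq_+^*/z)\psi_{+,2}(-q_0^2/z)$ once, evaluate at $z_n$, and use the analogous identity for $\psi_{-,2}$ to express everything in terms of $\psi_{-,1}(-q_0^2/z_n)$ and its derivative. The coefficient of $\psi_{-,1}'(-q_0^2/z_n)$ reproduces $b_-(z_n)=(q_+^*/q_-)b_+(-q_0^2/z_n)$, while the coefficient of $\psi_{-,1}(-q_0^2/z_n)$ (after using the first relation to cancel) gives the new identity $d_-(z_n)=(q_+^*/q_-)(q_0^2/z_n^2)\,d_+(-q_0^2/z_n)$.

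The main obstacle, and the source of the asymmetric term $+2/z_n$ appearing in the $B$-chain, is the bookkeeping for the third derivative of $s_{11}$ under the Joukowsky involution. Starting from $s_{11}(z)=(q_+^*/q_-^*)s_{22}(-q_0^2/z)$ and differentiating three times with the chain rule produces several intermediate terms proportional to $s_{22}'(-q_0^2/z)$; these vanish at $z=z_n$ because $-q_0^2/z_n$ is a \emph{double} zero of $s_{22}$, but terms like $-6q_0^4/z_n^5$ survive from the repeated differentiation of the prefactors $-q_0^2/z^k$. The net outcome is the clean identities $s_{11}''(z_n)=(q_+^*/q_-^*)(q_0^4/z_n^4)\,s_{22}''(-q_0^2/z_n)$ and
\[
\frac{s_{11}'''(z_n)}{s_{11}''(z_n)} \;=\; -\frac{6}{z_n} \;+\; \frac{q_0^2}{z_n^2}\,\frac{s_{22}'''(-q_0^2/z_n)}{s_{22}''(-q_0^2/z_n)}.
\]
Dividing the second by $3$ and substituting into the definition of $B_-[z_n]$ together with the ratio $d_-/b_-$ from the previous paragraph produces $B_-[z_n]=(q_0^2/z_n^2)B_+(-q_0^2/z_n)+2/z_n$, the $+2/z_n$ being exactly $-(-6/z_n)/3$. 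The analogous substitution into $A_-[z_n]=2b_-(z_n)/s_{11}''(z_n)$ yields $A_-[z_n]=(z_n^4 q_-^*/q_0^4 q_-)\,A_+(-q_0^2/z_n)$. Finally, applying the conjugation identities of the first step to the point $-q_0^2/z_n\in\mathbb{Z}\cap D^+$ (a double zero of $s_{22}$) gives $A_+(-q_0^2/z_n)=-A_-^*(-q_0^2/z_n^*)$ and $B_+(-q_0^2/z_n)=B_-^*(-q_0^2/z_n^*)$, closing both chains and completing the proof.
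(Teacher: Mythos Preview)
Your proposal is correct and follows essentially the same route as the paper's own proof: both arguments derive $b_-(z_n)=-b_+^*(z_n^*)$, $d_-(z_n)=-d_+^*(z_n^*)$ from the Schwarz symmetry of the Jost functions, obtain $b_-(z_n)=(q_+^*/q_-)b_+(-q_0^2/z_n)$ and $d_-(z_n)=(q_+^*q_0^2/q_-z_n^2)d_+(-q_0^2/z_n)$ from the Joukowsky symmetry, and produce the extra $2/z_n$ in the $B$-chain via the chain-rule computation $s_{11}'''(z_n)/s_{11}''(z_n)=(q_0^2/z_n^2)s_{22}'''/s_{22}''-6/z_n$ (terms with $s_{22}'$ vanishing at the double zero). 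Your write-up is in fact slightly more explicit than the paper about why the $s_{22}'$ contributions disappear, but the underlying computation is identical.
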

\begin{proof}
From the expression \eqref{Q28}, the first, second and third derivatives of $s_{11}(z)$ can be infered with respect to variable $z$, therefore
\begin{align}\label{T9}
A_{-}[z_{n}]&=\frac{2b_{-}(z_{n})}{s'_{11}(z_{n})}=-\frac{2b_{-}^{*}(z_{n}^{*})}
{(s''_{22}(z_{n}^{*}))^{*}}=-A_{+}^{*}[z_{n}^{*}],\\
A_{-}[z_{n}]&=\frac{2b_{-}(z_{n})}{s'_{11}(z_{n})}=\frac{2(q_{+}^{*}/q_{-})
b_{+}\left(-\frac{q_{0}^{2}}{z_{n}}\right)}{(q_{+}^{*}/q_{-}^{*})
(q_{0}^{4}/z_{n}^{4})s''_{22}\left(-\frac{q_{0}^{2}}{z_{n}}\right)}=
\frac{z_{n}^{4}q_{-}^{*}}{q_{-}^{4}q_{-}}A_{+}\left(-\frac{q_{0}^{2}}{z_{n}}\right),\\
B_{-}[z_{n}]&=\frac{d_{-}^{*}(z_{n}^{*})}{b_{-}^{*}(z_{n}^{*})}-
\frac{(s'''_{11}(z_{n}^{*}))^{*}}{(s''_{11}(z_{n}^{*}))^{*}}=B_{+}^{*}[z_{n}^{*}].
\end{align}
Combine \eqref{Q25} with \eqref{t1}, we have
\begin{align}\label{T10}
\psi'_{+,2}(z)=\frac{q_{-}q_{0}^{2}}{z^{2}q_{+}^{*}}d_{-}\left(\frac{-q_{0}^{2}}
{z}\right)\psi_{-,1}(z)+\frac{q_{-}}{q_{+}^{*}}b_{-}\left(\frac{-q_{0}^{2}}
{z}\right)\psi'_{-,1}(z).
\end{align}
The coefficients of comparison \eqref{T10} and \eqref{t3} are
\begin{align}
\left\{\begin{aligned}
\frac{q_{-}q_{0}^{2}}{q_{+}^{*}z_{2}}d_{-}\left(\frac{-q_{0}^{2}}
{z}\right)=d_{+}(z),\\
\frac{q_{-}}{q_{+}^{*}}b_{-}\left(\frac{-q_{0}^{2}}
{z}\right)=b_{+}(z),
\end{aligned}
\right.\Rightarrow
\left\{\begin{aligned}
d_{-}(z)&=\frac{q_{+}^{*}q_{0}^{2}}{q_{-}z^{2}}d_{+}\left(\frac{-q_{0}^{2}}
{z}\right),\\
b_{-}(z)&=\frac{q_{+}^{*}}{q_{-}}b_{+}\left(\frac{-q_{0}^{2}}
{z}\right).\end{aligned} \right.
\end{align}
Further
\begin{align}
\begin{split}
B_{-}[z_{n}]&=\frac{d_{-}(z_{n})}{b_{-}(z_{n})}-\frac{s'''_{11}(z_{n})}{3s''_{11}(z_{n})}
=\frac{\frac{q_{+}^{*}q_{0}^{2}}{q_{-}z_{n}^{2}}d_{+}\left(-\frac{q_{0}^{2}}{z_{n}}
\right)}{\frac{q_{+}^{*}}{q_{-}}b_{+}\left(-\frac{q_{0}^{2}}{z_{n}}\right)}-
\frac{\frac{q_{+}^{*}q_{0}^{6}}{q_{-}^{*}z_{n}^{6}}s'''_{22}
\left(-\frac{q_{0}^{2}}{z_{n}}\right)-\frac{6q_{+}^{*}q_{0}^{4}}{q_{-}^{*}z_{n}^{5}}
s''_{22}\left(-\frac{q_{0}^{2}}{z_{n}}\right)}{3\frac{q_{+}^{*}q_{0}^{4}}
{q_{-}^{*}z_{n}^{4}}s''_{22}\left(-\frac{q_{0}^{2}}{z_{n}}\right)}\\
&=\frac{q_{0}^{2}}{z_{n}^{2}}\frac{d_{+}\left(\frac{-q_{0}^{2}}
{z_{n}}\right)}{b_{+}\left(\frac{-q_{0}^{2}}{z_{n}}\right)}-\frac{q_{0}^{2}}{z_{n}^{2}}
\frac{s'''_{22}\left(\frac{-q_{0}^{2}}{z_{n}}\right)}
{3s''_{22}\left(\frac{-q_{0}^{2}}{z_{n}}\right)}+\frac{2}{z_{n}}=
\frac{q_{0}^{2}}{z_{n}^{2}}B_{-}\left[\frac{-q_{0}^{2}}{z_{n}}\right]+\frac{2}{z_{n}}.
\end{split}
\end{align}
In conclusion, the proposition is proved.
\end{proof}

In addition, the above residue conditions can be converted to via \eqref{M16} and \eqref{Q40}
\addtocounter{equation}{1}
\begin{align}\label{Residue1}
\mathop{P_{-2}}_{z=\xi_{n}}M_{1}^{-}(z)=\mathop{P_{-2}}_{z=\xi_{n}}
\left[\frac{u_{+,1}(z)}{s_{11}(z)}\right]&=\frac{2u_{+,1}(\xi_{n})}
{s''_{11}(\xi_{n})}=\frac{2b_{-}(\xi_{n})}{s''_{11}(\xi_{n})}e^
{2i\theta(\xi_{n})}u_{-,2}(\xi_{n}),\tag{\theequation a}\\
\mathop{P_{-2}}_{z=\hat{\xi}_{n}}M_{2}^{+}(z)=\mathop{P_{-2}}_{z=\hat{\xi}_{n}}
\left[\frac{u_{+,2}(z)}{s_{22}(z)}\right]&=\frac{2u_{+,2}(\hat{\xi}_{n})}
{s''_{22}(\hat{\xi}_{n})}=\frac{2b_{+}(\hat{\xi}_{n})}{s''_{22}(\hat{\xi}_{n})}e^
{-2i\theta(\hat{\xi}_{n})}u_{-,1}(\hat{\xi}_{n}),\tag{\theequation b}\\
\mathop{Res}_{z=\xi_{n}}M_{1}^{-}(z)=\mathop{Res}_{z=\xi_{n}}
\left[\frac{u_{+,1}(z)}{s_{11}(z)}\right]&=\frac{2b_{-}(\xi_{n})}{s_{11}''(\xi_{n})}
e^{2i\theta(\xi_{n})}\left[u'_{-,2}(\xi_{n})+(B_{-}
[\xi_{n}]+2i\theta'(\xi_{n}))u_{-,2}(\xi_{n}))\right],\tag{\theequation c}\\
\mathop{Res}_{z=\hat{\xi}_{n}}M_{2}^{+}(z)=\mathop{Res}_{z=\hat{\xi}_{n}}
\left[\frac{u_{+,2}(z)}{s_{22}(z)}\right]&=\frac{2b_{+}(\hat{\xi}_{n})}
{s_{22}''(\hat{\xi}_{n})}e^{-2i\theta(\hat{\xi}_{n})}\left[u'_{-,1}(\hat{\xi}_{n})+(B_{+}
[\hat{\xi}_{n}]-2i\theta'(\hat{\xi}_{n}))u_{-,1}(\hat{\xi}_{n}))\right].
\tag{\theequation d} \label{Residue4}
\end{align}
with the notion
\begin{align}
A_{+}[\hat{\xi}_{n}]=\frac{2b_{+}(\hat{\xi}_{n})}{s''_{22}(\hat{\xi}_{n})},\quad
\chi_{n}^{+}=B_{+}[\hat{\xi}_{n}]-2i\theta'(\hat{\xi}_{n}),\\
A_{-}[\xi_{n}]=\frac{2b_{-}(\xi_{n})}{s''_{11}(\xi_{n})},\quad
\chi_{n}^{-}=B_{-}[\xi_{n}]+2i\theta'(\xi_{n}).
\end{align}
\subsection{Formulation of the RH problem with double zeros}
In the analysis of simple zero, we have established the RH problem, which is still applicable to the double zeros, including the asymptotic behavior of matrix $M^{\pm}$ and jump matrix of $G$. To deal with the RH problem, we will regularize it via subtracting out the asymptotic behavior and the pole contributions, that is
\begin{align}\label{T111}
\begin{split}
 M^{+}-&\mathbb{I}+\frac{i}{z}\sigma_{3}Q_{-}-\sum_{n=1}^{2N}\left\{
\frac{\mathop{Res}\limits_{z=\hat{\xi}_{n}}M^{+}}{z-\hat{\xi}_{n}}
+\frac{\mathop{P_{-2}}\limits_{z=\hat{\xi}_{n}}M^{+}}{(z-\hat{\xi}_{n})^{2}}
+\frac{\mathop{Res}\limits_{z=\xi_{n}}M^{-}}{z-\xi_{n}}
+\frac{\mathop{P_{-2}}\limits_{z=\xi_{n}}M^{-}}{(z-\xi_{n})^{2}}\right\}\\&=
M^{-}-\mathbb{I}+\frac{i}{z}\sigma_{3}Q_{-}-\sum_{n=1}^{2N}\left\{
\frac{\mathop{Res}\limits_{z=\hat{\xi}_{n}}M^{+}}{z-\hat{\xi}_{n}}
+\frac{\mathop{P_{-2}}\limits_{z=\hat{\xi}_{n}}M^{+}}{(z-\hat{\xi}_{n})^{2}}
+\frac{\mathop{Res}\limits_{z=\xi_{n}}M^{-}}{z-\xi_{n}}
+\frac{\mathop{P_{-2}}\limits_{z=\xi_{n}}M^{-}}{(z-\xi_{n})^{2}}\right\}
-M^{-}G,
\end{split}
\end{align}
here the first seven terms on the right and left sides of the equation \eqref{T111} are analytic in $D^{-}$ and $D^{+}$, respectively. Meanwhile, the asymptotic behavior of the equation \eqref{T111} is  $O(1/z)$ $(z\rightarrow\infty)$ and $O(1)$ $(z\rightarrow0)$. As well as the asymptotic behavior of $G(z)$ is  clear, i.e., $O(1/z)$ $(z\rightarrow\infty)$ and $O(1)$ $(z\rightarrow0)$ along with the real axis.  Similar to simple  pole case, one can infer that using Cauchy operator and Plemelj's formulae
\begin{align}\label{T11}
\begin{split}
M(x,t;z)=&\mathbb{I}-\frac{i}{z}\sigma_{3}Q_{-}+\sum_{n=1}^{2N}\left\{
\frac{\mathop{Res}\limits_{z=\hat{\xi}_{n}}M^{+}}{z-\hat{\xi}_{n}}
+\frac{\mathop{P_{-2}}\limits_{z=\hat{\xi}_{n}}M^{+}}{(z-\hat{\xi}_{n})^{2}}
+\frac{\mathop{Res}\limits_{z=\xi_{n}}M^{-}}{z-\xi_{n}}
+\frac{\mathop{P_{-2}}\limits_{z=\xi_{n}}M^{-}}{(z-\xi_{n})^{2}}\right\} \\
&+\frac{1}{2\pi i}\int_{\Sigma}\frac{M^{-}(x,t;\zeta)G(x,t;\zeta)}{\zeta-z}\,d\zeta,
\quad z\in\mathbb{C}\setminus\Sigma.
\end{split}
\end{align}

To give the expression of $M$, a closed algebraic system is needed, so we will expand the following work according to this purpose. Firstly, the expression \eqref{Matr} implies at the zero conditions
\begin{align}\label{T12}
\begin{split}
\mathop{Res}_{z=\xi_{n}}[M^{-}]=\left(\mathop{Res}_{z=\xi_{n}}
\left[\frac{\mu_{+,1}(x,t;z)}{s_{11}(z)}\right],0\right), \quad \mathop{P_{-2}}_{z=\xi_{n}}M^{-}=\left(\mathop{P_{-2}}_{z=\xi_{n}}
\left[\frac{\mu_{+,1}(x,t;z)}{s_{11}(z)}\right],0\right),  \\
\mathop{Res}_{z=\hat{\xi}_{n}}[M^{+}]=\left(0,\mathop{Res}_{z=\hat{\xi}_{n}}
\left[\frac{\mu_{+,2}(x,t;z)}{s_{22}(z)}\right]\right), \quad \mathop{P_{-2}}_{z=\hat{\xi}_{n}}M^{+}=\left(0,\mathop{P_{-2}}_{z=\hat{\xi}_{n}}
\left[\frac{\mu_{+,2}(x,t;z)}{s_{22}(z)}\right]\right).
\end{split}
\end{align}
Combining with \eqref{Residue1}-\eqref{Residue4}
\begin{align}\label{T13}
\begin{split}
&\frac{\mathop{Res}\limits_{z=\hat{\xi}_{n}}M^{+}}{z-\hat{\xi}_{n}}
+\frac{\mathop{P_{-2}}\limits_{z=\hat{\xi}_{n}}M^{+}}{(z-\hat{\xi}_{n})^{2}}
+\frac{\mathop{Res}\limits_{z=\xi_{n}}M^{-}}{z-\xi_{n}}
+\frac{\mathop{P_{-2}}\limits_{z=\xi_{n}}M^{-}}{(z-\xi_{n})^{2}}=\\&
\left(H_{n}^{-}(z)\left[u'_{-,2}(\xi_{n})+
\left(\chi_{n}^{-}+\frac{1}{z-\xi_{n}}\right)u_{-,2}(\xi_{n})\right],
H_{n}^{+}(z)\left[u'_{-,1}(\hat{\xi}_{n})+\left(\chi_{n}^{+}+
\frac{1}{z-\hat{\xi}_{n}}\right)u_{-,1}(\hat{\xi}_{n})\right]\right),
\end{split}
\end{align}
with $H_{n}^{-}(z)=\frac{A_{-}[\xi_{n}]}{z-\xi_{n}}e^{2i\theta(\xi_{n})}$ and
$H_{n}^{+}(z)=\frac{A_{+}[\hat{\xi}_{n}]}{z-\hat{\xi}_{n}}
e^{-2i\theta(\hat{\xi}_{n})}$.
To get the expression of $u'_{-,2}(\xi_{n})$ $u'_{-,1}(\hat{\xi}_{n})$ $u_{-,1}(\hat{\xi}_{n})$ and $u_{-,2}(\xi_{n})$ in \eqref{T13}, we consider the second column of $M(z)$ defined by \eqref{T11} along with \eqref{T13} as $z=\xi_{s}$ ($s=1, 2, \cdots, 2N$), that is
\begin{align}\label{T14}
\begin{split}
u_{-,2}(x,t;\xi_{s})=&\left(\begin{array}{cc}
                       -\frac{iq_{-}}{\xi_{s}} \\
                        1
                     \end{array}\right)
+\frac{1}{2\pi i}\int_{\Sigma}\frac{(M^{-}G)_{2}(\zeta)}{\zeta-\xi_{k}}\,d\zeta \\ &+\sum_{n=1}^{2N}H_{n}^{+}(\xi_{s})
\left[u'_{-,1}(x,t;\hat{\xi}_{n})+\left(\chi_{n}^{+}+
\frac{1}{\xi_{s}-\hat{\xi}_{n}}\right)
u_{-,1}(x,t;\hat{\xi}_{n})\right].
\end{split}
\end{align}
Furthermore
\begin{align}\label{T15}
\begin{split}
u'_{-,2}(x,t;\xi_{s})=&\left(\begin{array}{cc}
                       \frac{iq_{-}}{\xi_{s}^{2}} \\
                        0
                     \end{array}\right)
+\frac{1}{2\pi i}\int_{\Sigma}\frac{(M^{-}G)_{2}(\zeta)}{(\zeta-\xi_{s})^{2}}\,
d\zeta \\
&-\sum_{n=1}^{2N}\frac{H_{n}^{+}(\xi_{s})}{\xi_{s}-\hat{\xi}_{n}}
\left[u'_{-,1}(x,t;\hat{\xi}_{n})+\left(\chi_{n}^{+}+
\frac{2}{\xi_{s}-\hat{\xi}_{n}}\right)
u_{-,1}(x,t;\hat{\xi}_{n})\right].
\end{split}
\end{align}
From the \eqref{Q25} and \eqref{Q40}, we arrive at
\begin{align}\label{T16}
\left\{\begin{aligned}
u_{-,2}(\xi_{s})&=-\frac{iq_{-}}{\xi_{s}}u_{-,1}(\hat{\xi}_{s}),\\
u'_{-,2}(\xi_{s})&=\frac{iq_{-}}{\xi_{s}^{2}}u_{-,1}(\hat{\xi}_{s})-
\frac{iq_{0}^{2}q_{-}}{\xi_{s}^{3}}u'_{-,1}(\hat{\xi}_{s}),
\end{aligned}\right.
\end{align}
substituting \eqref{T16} into \eqref{T14} and \eqref{T15}, then
\begin{align}\label{T17}
\left\{\begin{aligned}
&\sum_{n=1}^{2N}\left(H^{+}_{n}(\xi_{s})u'_{-,1}(x,t;\hat{\xi}_{n})+
\left[H^{+}_{n}(\xi_{s})\left(\chi^{+}_{n}+\frac{1}{\xi_{s}-\hat{\xi}_{n}}\right)+
\frac{iq_{-}}{\xi_{s}}\delta_{sn}\right]u_{-,1}(\hat{\xi}_{n})\right)\\
&+\left(\begin{array}{cc}
        -\frac{iq_{-}}{\xi_{s}} \\
         1
       \end{array}\right)
+\frac{1}{2\pi i}\int_{\Sigma}\frac{(M^{-}G)_{2}(\zeta)}{\zeta-\xi_{s}}\,d\zeta=0,\\
&\sum_{n=1}^{2N}\frac{H_{n}^{+}(\xi_{k})}{\xi_{s}-\hat{\xi}_{n}}\left(\left[
\left(\chi_{n}^{+}+\frac{2}{\xi_{s}-\hat{\xi}_{n}}\right)
+\frac{iq_{-}}{\xi_{k}^{2}}\delta_{sn}\right]u_{-,1}(\hat{\xi}_{n})-
\frac{iq_{-}q_{0}^{2}}{\xi_{s}^{3}}\delta_{sn}u'_{-,1}(\hat{\xi}_{n})\right) \\
&-\left(\begin{array}{cc}
        \frac{iq_{-}}{\xi_{s}^{2}} \\
         0
       \end{array}\right)
-\frac{1}{2\pi i}\int_{\Sigma}\frac{(M^{-}G)_{2}(\zeta)}{(\zeta-\xi_{s})^{2}}\,d\zeta=0.
\end{aligned}\right.
\end{align}
The equation \eqref{T17} generates an algebraic system  that includes the functions we need, that is, $u_{-,1}(x,t;\hat{\xi}_{n})$ and $u'_{-,1}(x,t;\hat{\xi}_{n})$. In additional, based on the obtained expressions, $u_{-,2}(x,t;\xi_{n})$ and $u'_{-,2}(x,t;\xi_{n})$ can be derived by \eqref{T16}. Finally,  the $M(z)$ can be expressed from \eqref{T11}.
\begin{prop}
The solution to the focusing Kundu-Eckhaus equation  with double poles is determined  by
\begin{align}\label{T18}
\begin{split}
q(x,t)=-q_{-}-&i\sum_{n=1}^{2N}A_{+}[\hat{\xi}_{n}]e^{-2i\theta(x,t;\hat{\xi}_{n})}
[u'_{-,11}(x,t;\hat{\xi}_{n})\\&+\chi_{n}^{+}u_{-,11}(x,t;\hat{\xi}_{n})]
+\int_{\Sigma}\frac{(M^{-}G)_{12}(x,t;\zeta)}{2\pi}\,d\zeta.
\end{split}
\end{align}
\end{prop}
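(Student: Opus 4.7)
The plan is to mirror the single-pole argument (Proposition 3.4) and read off $q(x,t)$ from the $z\to\infty$ expansion of the closed representation \eqref{T11} of $M(x,t;z)$. First I would invoke the same asymptotic relation
\[
M(x,t;z)=\mathbb{I}+\frac{1}{z}M^{(1)}(x,t;z)+O(1/z^{2}),\qquad z\to\infty,
\]
together with the reconstruction identity $q(x,t)=-i\bigl(M^{(1)}\bigr)_{12}$, which is derived from substituting this Taylor series into the spatial part of the Lax equation \eqref{Q50} and collecting the $z^{0}$-coefficient. This derivation depends only on the asymptotic behavior of $M$ and is insensitive to whether the interior poles are simple or double, so the simple-pole calculation \eqref{Q50}--\eqref{Q51} carries over verbatim.

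Next I would expand the right-hand side of \eqref{T11} at $z\to\infty$ and extract $M^{(1)}$. The $-\tfrac{i}{z}\sigma_{3}Q_{-}$ term directly contributes $-i\sigma_{3}Q_{-}$ to $M^{(1)}$, whose $(1,2)$-entry equals $-iq_{-}$. The Cauchy integral expands as
\[
\frac{1}{2\pi i}\int_{\Sigma}\frac{M^{-}(\zeta)G(\zeta)}{\zeta-z}\,d\zeta
=-\frac{1}{2\pi i z}\int_{\Sigma}M^{-}(\zeta)G(\zeta)\,d\zeta+O(1/z^{2}),
\]
contributing $-\tfrac{1}{2\pi i}\int_{\Sigma}(M^{-}G)(\zeta)\,d\zeta$ to $M^{(1)}$. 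For the discrete-spectrum sum, I would use the compact form \eqref{T13}, whose second-column term is $H_{n}^{+}(z)\bigl[u'_{-,1}(\hat{\xi}_{n})+(\chi_{n}^{+}+\tfrac{1}{z-\hat{\xi}_{n}})u_{-,1}(\hat{\xi}_{n})\bigr]$. Since $H_{n}^{+}(z)=\tfrac{A_{+}[\hat{\xi}_{n}]}{z-\hat{\xi}_{n}}e^{-2i\theta(\hat{\xi}_{n})}=\tfrac{A_{+}[\hat{\xi}_{n}]e^{-2i\theta(\hat{\xi}_{n})}}{z}+O(1/z^{2})$, while the $\tfrac{1}{(z-\hat{\xi}_{n})^{2}}$ piece from the $P_{-2}$ contribution is already $O(1/z^{2})$, only the residue term survives at order $1/z$, giving the second-column contribution $A_{+}[\hat{\xi}_{n}]e^{-2i\theta(\hat{\xi}_{n})}\bigl[u'_{-,1}(\hat{\xi}_{n})+\chi_{n}^{+}u_{-,1}(\hat{\xi}_{n})\bigr]$.

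Finally, I would take the $(1,2)$-entry of $M^{(1)}$ and multiply by $-i$. The $(1,2)$-entry of the second column above involves the first components $u'_{-,11}(\hat{\xi}_{n})$ and $u_{-,11}(\hat{\xi}_{n})$, so
\[
q(x,t)=-i(M^{(1)})_{12}=-q_{-}-i\sum_{n=1}^{2N}A_{+}[\hat{\xi}_{n}]e^{-2i\theta(\hat{\xi}_{n})}\bigl[u'_{-,11}(\hat{\xi}_{n})+\chi_{n}^{+}u_{-,11}(\hat{\xi}_{n})\bigr]+\int_{\Sigma}\frac{(M^{-}G)_{12}(\zeta)}{2\pi}\,d\zeta,
\]
which is exactly \eqref{T18}. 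The unknown vectors $u_{-,1}(\hat{\xi}_{n})$ and $u'_{-,1}(\hat{\xi}_{n})$ are, in turn, determined by the closed algebraic--integral system \eqref{T17}.

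The main technical obstacle is bookkeeping: one must verify that the $P_{-2}$ contributions, although essential to closing the RH problem at order $1/z^{2}$, do not pollute the $1/z$ coefficient, and that the extra $\tfrac{1}{z-\hat{\xi}_{n}}$ terms inside the brackets of \eqref{T13} (which arise from the Laurent series of $\psi_{+,2}/s_{22}$ at a double zero) likewise only affect $O(1/z^{2})$. Once this Laurent-order accounting is done carefully, the remaining manipulations are algebraically parallel to the simple-pole case.
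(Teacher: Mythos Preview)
Your proof is correct and follows essentially the same approach as the paper: the paper does not spell out a separate proof for this proposition but relies on the simple-pole argument of Proposition~3.4 (the expansion \eqref{Taylor}, the Lax-pair identity \eqref{Q50}, and the reconstruction formula \eqref{Q51}) applied to the double-pole representation \eqref{T11}--\eqref{T13}. Your additional bookkeeping that the $P_{-2}$ terms and the extra $1/(z-\hat{\xi}_{n})$ inside the brackets contribute only at $O(1/z^{2})$ is exactly the verification needed to justify carrying that argument over.
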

For convenience, let¡¯s make the following notions:
$\tilde{M}=\left(\begin{array}{cc}
                  \tilde{M}^{11} & \tilde{M}^{12} \\
                  \tilde{M}^{21} & \tilde{M}^{22} \\
                \end{array}
              \right)$  and  $\tilde{M}^{(sj)}=(\tilde{m}_{kn}^{(sj)})_{(2N)\times(2N)}$ with
 $\tilde{m}_{kn}^{(11)}=H_{n}^{+}(\xi_{k})\left(\chi_{n}^{+}+\frac{1}{\xi_{k}-
 \hat{\xi}_{n}}\right)+\frac{iq_{-}}{\xi_{k}}\delta_{k,n}$, $\tilde{m}_{kn}^{(12)}=H_{n}^{+}(\xi_{k})$,
 $\tilde{m}_{kn}^{(21)}=\frac{H_{n}^{+}(\xi_{k})}{\xi_{k}-\hat{\xi}_{n}}\left(\chi_{n}^{+}
 +\frac{2}{\xi_{k}-\hat{\xi}_{n}}\right)+\frac{iq_{-}}
 {\xi_{k}^{2}}\delta_{k,n}$,
 $\tilde{m}_{kn}^{(22)}=\frac{H_{n}^{+}(\xi_{k})}{\xi_{k}-\hat{\xi}_{n}}-
 \frac{iq_{-}q_{0}^{2}}{\xi_{k}^{3}}\delta_{k,n}$;
$\omega_{n}^{(1)}=A_{+}[\hat{\xi_{n}}]e^{-2i\theta(\hat{\xi_{n}})}\chi_{n}^{+}$,
$\omega_{n}^{(2)}=A_{+}[\hat{\xi_{n}}]e^{-2i\theta(\hat{\xi_{n}})}$,
$\nu_{n}^{(1)}=\frac{iq_{-}}{\xi_{n}}$, $\nu_{n}^{(2)}=\frac{iq_{-}}{\xi_{n}^{2}}$,
$y=(y_{1}^{(1)}, \cdots, y_{2N}^{(1)}, y_{1}^{(2)}, \cdots, y_{2N}^{(2)})^{T}$ with
$y_{n}^{(1)}=u_{-,11}(\hat{\xi}_{n})$ and $y_{n}^{(2)}=u'_{-,11}(\hat{\xi}_{n})$ .
\begin{prop}
Next, we consider the solution without the reflection, that is, $G=0$, which leads to the integral part of the solution \eqref{T18} is equal to zero, then the solution can be written as
\begin{align}\label{T19}
q(x,t)=-q_{-}-i\sum_{n=1}^{2N}A_{+}[\hat{\xi}_{n}]e^{-2i\theta(\hat{\xi}_{n})}
[u'_{-,11}(\hat{\xi}_{n})+\chi_{n}^{+}u_{-,11}(\hat{\xi}_{n})],
\end{align}
which is equivalent to
\begin{align}\label{T20}
q(x,t)=-q_{-}+i\frac{\det\left(
                \begin{array}{cc}
                  \tilde{M} & \nu \\
                  \omega^{T} & 0 \\
                \end{array}
              \right)}{\det \tilde{M}}.
\end{align}
\end{prop}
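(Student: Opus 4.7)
The plan is to reduce the statement to a linear-algebra identity, starting from the two ingredients already established: the reconstruction formula \eqref{T18} and the closed algebraic system \eqref{T17}. Setting $G=0$ in \eqref{T18} kills the Cauchy-type integral and gives \eqref{T19} immediately, so only \eqref{T20} is nontrivial: one must compute the finite sum in \eqref{T19} explicitly.

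Next I would set $G=0$ in both equations of \eqref{T17}, which again removes the integral terms, and read off the top (first-row) scalar component of each vector identity. Writing $y_n^{(1)}:=u_{-,11}(\hat\xi_n)$ and $y_n^{(2)}:=u'_{-,11}(\hat\xi_n)$, the first of \eqref{T17} becomes $\sum_n\bigl(\tilde m^{(11)}_{kn}y_n^{(1)}+\tilde m^{(12)}_{kn}y_n^{(2)}\bigr)=\nu^{(1)}_k$, and the second becomes $\sum_n\bigl(\tilde m^{(21)}_{kn}y_n^{(1)}+\tilde m^{(22)}_{kn}y_n^{(2)}\bigr)=\nu^{(2)}_k$, where the entries are exactly those introduced in the displayed list of notation right before the statement. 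Stacking the two blocks yields the $4N\times 4N$ matrix equation $\tilde M\,y=\nu$, hence $y=\tilde M^{-1}\nu$.

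With the weight vector $\omega$ built from $\omega_n^{(1)}=A_+[\hat\xi_n]e^{-2i\theta(\hat\xi_n)}\chi_n^+$ and $\omega_n^{(2)}=A_+[\hat\xi_n]e^{-2i\theta(\hat\xi_n)}$, the sum in \eqref{T19} is precisely $\omega^T y=\omega^T\tilde M^{-1}\nu$. Applying the standard bordered-determinant identity (the $d=0$ scalar Schur complement)
\begin{equation*}
\det\begin{pmatrix}\tilde M & \nu \\ \omega^T & 0\end{pmatrix}=-\det(\tilde M)\,\omega^T\tilde M^{-1}\nu,
\end{equation*}
one obtains $\omega^T y=-\det(\cdots)/\det\tilde M$. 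Substituting this into \eqref{T19} and absorbing the minus sign with the $-i$ prefactor produces the $+i$ in front of the determinant ratio in \eqref{T20}, finishing the derivation.

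The only genuinely delicate step is the bookkeeping from \eqref{T17} to the block decomposition $\tilde M y=\nu$: one must keep track of the fact that only the top scalar component of each vector equation survives (since the potential reconstruction in \eqref{T19} involves only $u_{-,11}$), that the second equation of \eqref{T17} acquires an extra $1/(\xi_k-\hat\xi_n)$ factor from the differentiation of the $(z-\hat\xi_n)^{-1}$ pole, and that the diagonal $-iq_-q_0^2/\xi_k^3\,\delta_{kn}$ term in $\tilde m^{(22)}_{kn}$ and the diagonal $iq_-/\xi_k^2\,\delta_{kn}$ term in $\tilde m^{(21)}_{kn}$ originate from eliminating $u_{-,2}(\xi_s)$ and $u'_{-,2}(\xi_s)$ via the symmetry identity \eqref{T16}. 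Once this matching is verified entry by entry, the determinant identity is mechanical.
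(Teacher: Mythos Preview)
Your proposal is correct and follows essentially the same route as the paper: set $G=0$ in \eqref{T17} to obtain the linear system $\tilde M y=\nu$ for the unknowns $u_{-,11}(\hat\xi_n)$ and $u'_{-,11}(\hat\xi_n)$, then substitute into \eqref{T19}. The paper phrases the final linear-algebra step as ``solved by Cramer's rule,'' whereas you invoke the equivalent bordered-determinant (Schur complement) identity $\det\bigl(\begin{smallmatrix}\tilde M & \nu\\ \omega^T & 0\end{smallmatrix}\bigr)=-\det(\tilde M)\,\omega^T\tilde M^{-1}\nu$; these are the same computation in different packaging.
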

\begin{proof}
Under the condition with reflection-less, $u_{-,11}(\hat{\xi}_{n})$ and $u'_{-,11}(\hat{\xi}_{n})$ are determined by
\begin{align}\label{T21}
\left\{\begin{aligned}
&\sum_{n=1}^{2N}\left(H^{+}_{n}(\xi_{s})u'_{-,11}(\hat{\xi}_{n})+
\left[H^{+}_{n}(\xi_{s})\left(\chi^{+}_{n}+\frac{1}{\xi_{s}-\hat{\xi}_{n}}\right)+
\frac{iq_{-}}{\xi_{s}}\delta_{sn}\right]u_{-,11}(\hat{\xi}_{n})\right)
=\frac{iq_{-}}{\xi_{s}}\\
&\sum_{n=1}^{2N}\frac{H_{n}^{+}(\xi_{k})}{\xi_{s}-\hat{\xi}_{n}}\left(\left[
\left(\chi_{n}^{+}+\frac{2}{\xi_{s}-\hat{\xi}_{n}}\right)
+\frac{iq_{-}}{\xi_{k}^{2}}\delta_{sn}\right]u_{-,1}(\hat{\xi}_{n})-
\frac{iq_{-}q_{0}^{2}}{\xi_{s}^{3}}\delta_{sn}u'_{-,1}(\hat{\xi}_{n})\right)
=\frac{iq_{-}}{\xi_{s}^{2}},
\end{aligned}\right.
\end{align}
which can be transformed into the following form by using the above notion
\begin{align}\label{T22}
Hy=\nu,
\end{align}
the equation \eqref{T22} can be solved by the Cramer's rule. Then substituting the obtained results into the first expression of \eqref{T19}, one has the another equation, that's means the proposition is proved.
\end{proof}
\subsection{Trace formulae and theta condition}
As usual, the trace $s_{11}(z)$ and $s_{22}(z)$ can be similarly expressed by scattering and reflection coefficients, the theta condition will further be given via the obtained trace formulae. From the analysis of the simple zeros case, we know that $s_{11}(z)$ and $s_{22}(z)$ are analytic in $D^{-}$ and $D^{+}$, respectively. Assuming they have double zeros in discrete spectrum $\mathbb{Z}$, one has
\begin{align}\label{T23}
\vartheta^{-}(z)=s_{11}(z)\prod_{n=1}^{2N}\frac{(z-z_{n}^{*})^{2}(z+q_{0}^{2}/z_{n})^{2}}
{(z-z_{n})^{2}(z+q_{0}^{2}/z_{n}^{*})^{2}},\\
\vartheta^{+}(z)=s_{22}(z)\prod_{n=1}^{2N}\frac{(z-z_{n})^{2}(z+q_{0}^{2}/z_{n}^{*})^{2}}
{(z-z_{n}^{*})^{2}(z+q_{0}^{2}/z_{n})^{2}},
\end{align}
which are of the same asymptotic  behavior as $s_{11}(z)$ and $s_{22}(z)$ as $z\rightarrow\infty$ and have no zeros  point in their respective regions $D^{\pm}$.
 Note that $\vartheta^{+}\vartheta^{-}=s_{11}(z)s_{22}(z)$ and $\det S(z)=s_{11}(z)s_{22}(z)-s_{12}(z)s_{21}(z)=1$, which lead to
\begin{align}\label{T24}
\frac{1}{s_{11}(z)s_{22}(z)}=1-\frac{s_{21}(z)}{s_{11}(z)}\frac{s_{12}(z)}{s_{22}(z)}
=1-\rho(z)\tilde{\rho}(z)=1+\rho(z)\rho^{*}(z^{*}).
\end{align}
Obviously
\begin{align}\label{T25}
\vartheta^{+}(z)\vartheta^{-}(z)=\frac{1}{1+\rho(z)\rho^{*}(z^{*})}.
\end{align}
Taking the logarithm on both sides of the equation \eqref{T25}, we can get a scalar RH problem similar to the case of a single pole. By solving it with a similar method, the trace formula can be given as
\begin{align}
s_{11}(z)&=exp\left(-\frac{1}{2\pi i }\int_{\Sigma}\frac{\log[1+\rho(\zeta)\rho^{*}(\zeta^{*})]}{\zeta-z}
\,d\zeta\right)\prod_{n=1}^{2N}\frac{(z-z_{n})^{2}(z+q_{0}^{2}/z_{n}^{*})^{2}}
{(z-z_{n}^{*})^{2}(z+q_{0}^{2}/z_{n})^{2}},\\
s_{22}(z)&=exp\left(\frac{1}{2\pi i }\int_{\Sigma}\frac{\log[1+\rho(\zeta)\rho^{*}(\zeta^{*})]}{\zeta-z}
\,d\zeta\right)\prod_{n=1}^{2N}\frac{(z-z_{n}^{*})^{2}(z+q_{0}^{2}/z_{n})^{2}}
{(z-z_{n})^{2}(z+q_{0}^{2}/z_{n}^{*})^{2}}.
\end{align}
In additional, the theta condition can be derived via the obtained trace formulae as $z\rightarrow0$
\begin{align}
\arg\frac{q_{-}}{q_{+}}=argq_{-}-argq_{+}=\frac{1}{2\pi}\int_{\Sigma}
\frac{\log[1+\rho(\zeta)\rho^{*}(\zeta^{*})]}{\zeta}\,d\zeta+8\sum_{n=1}^{N}\arg z_{n}.
\end{align}
Considering the reflection-less, then the theta condition will be
\begin{align}
\arg\frac{q_{-}}{q_{+}}=argq_{-}-argq_{+}=8\sum_{n=1}^{N}\arg z_{n}.
\end{align}

In order to further observe the behavior of the solution \eqref{T20}, selecting the following parameters,  we get the following four figures through Maple

{\rotatebox{0}{\includegraphics[width=2.75cm,height=2.5cm,angle=0]{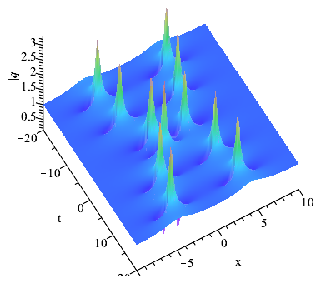}}}
{\rotatebox{0}{\includegraphics[width=2.75cm,height=2.5cm,angle=0]{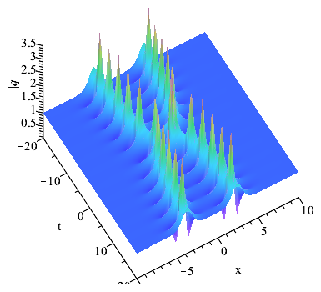}}}
{\rotatebox{0}{\includegraphics[width=2.75cm,height=2.5cm,angle=0]{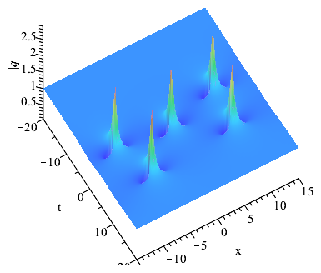}}}
{\rotatebox{0}{\includegraphics[width=2.75cm,height=2.5cm,angle=0]{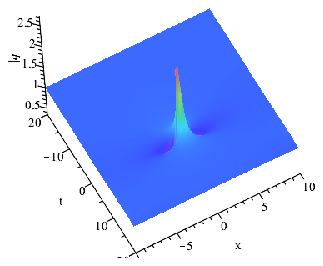}}}

 $\qquad\qquad(\textbf{a})\qquad \ \qquad\qquad\qquad(\textbf{b})
\ \qquad\quad\quad\qquad\qquad(\textbf{c})\qquad\qquad\qquad\qquad(\textbf{d})
\ \qquad$\\
\noindent { \small \textbf{Figure 7.} The soliton solutions  with the fixed parameters $N=1$,  $A_{-}[z_{1}]=B_{-}[z_{1}]=1$ and $q_{-}=1$. $\textbf{(a)}$: the breather solution with the $z_{1}=3i/2$; $\textbf{(b)}$: the breather solution with the $z_{1}=i/2$; $\textbf{(c)}$: the breather solution with the $z_{1}=1/4+i$; $\textbf{(d)}$: the rouge wave with the $z_{1}=1/8+i$.}

With the decrease of discrete eigenvalues, the number of breather solutions increases, which can be obtained naturally by comparing Fig. $7(a)$ and Fig. $7(b)$. It should be noted that not the smaller the eigenvalue, the more the number of breather solutions, but the solutions become irregular. As  It is interesting that when the eigenvalue $z_{1}=i$, it is the singular point of the solution \eqref{T20} for the KE equation, however when the eigenvalue is added to the real part, the solution tends to be the rouge wave solution as the real part tends to zero as shown in Fig. $7(d)$.

\section{Conclusions and discussions}

This work gives a detailed study of focusing Ke equation with non-zero boundary value at infinity, and gives a systematic answer to the questions in the introduction. Also similar to Biondini's work, an affine transformation is introduced to overcome the multi-valued function in order to construct the RH problem in the inverse scattering process. The next structure frame is to use the analytical and symmetric properties of Jost function and scattering coefficients to get the corresponding set of discrete spectrum points, and also the corresponding residue conditions. Based on these results, the form solution of focusing KE equation is obtained by solving the RH problem established in the inverse scattering process, and the propagation of the focusing KE equation solution under the non-reflection conditions is presented by using the software Maple. It should be noted that for most nonlinear partial differential equations, the solitons are related to the zero point of the analytical scattering coefficients. Therefore we further discuss the case when there are doubles zeros in the scattering coefficients, and obtain the doubles zeros soliton solutions, which is similarly the case simple zeros, but also has its differences.
\section*{Acknowledgements}
 
This work was supported by    the Postgraduate Research and Practice of Educational Reform for Graduate students in CUMT under Grant No. 2019YJSJG046, the Natural Science Foundation of Jiangsu Province under Grant No. BK20181351, the Six Talent Peaks Project in Jiangsu Province under Grant No. JY-059, the Qinglan Project of Jiangsu Province of China, the National Natural Science Foundation of China under Grant No. 11975306, the Fundamental Research Fund for the Central Universities under the Grant Nos. 2019ZDPY07 and 2019QNA35, and the General Financial Grant from the China Postdoctoral Science Foundation under Grant Nos. 2015M570498 and 2017T100413.

\end{document}